\newtheorem{definition}{Definition}
\newtheorem{lemma}[definition]{Lemma}
\newtheorem{theorem}[definition]{Theorem}
\newtheorem{corollary}[definition]{Corollary}
\def\iddots{\mathinner{\mkern1mu\raise\p@
\vbox{\kern7\p@\hbox{.}}\mkern2mu
\raise4\p@\hbox{.}\mkern2mu\raise7\p@\hbox{.}\mkern1mu}}
\newcommand{\PSres}{{\operatorname{PSres}}}
\newcommand{\Sres}{{\operatorname{Sres}}}
\newcommand{\chara}{{\operatorname{char}}}
\def\K{{\mathbb K}}
\def\cF{{\mathcal F}}
\def\cG{{\mathcal G}}
\def\M{{\mathsf M}}
\def\N{{\mathbb N}}
\def\Q{{\mathbb Q}}
\def\Z{{\mathbb Z}}
\def\twoFone#1#2#3#4{{_2F_1}\biggl(\begin{matrix}
  {#1}\kern.707em {#2}\\{#3}
\end{matrix}\,\bigg|\,#4\biggr)}
\newenvironment{proof}{
  \trivlist \item[\hskip \labelsep{\it Proof.}]}{\hfill\mbox{$\Box$}
  \endtrivlist}
\begin{document}

\begin{frontmatter}

\title{Subresultants of $(x-\alpha)^m$ and $(x-\beta)^n$, \\ Jacobi polynomials and complexity}

\author{A.~Bostan}
\address{Inria, Universit\'e Paris-Saclay, 1 rue Honor\'e d'Estienne d'Orves, 91120 Palaiseau, France}
\ead{alin.bostan@inria.fr}
\ead[url]{http://specfun.inria.fr/bostan}

\author{T.~Krick}
\address{Departamento de Matem\'atica, Facultad de
Ciencias Exactas y Naturales  and IMAS,
CONICET, Universidad de Buenos Aires,  Argentina} \ead{krick@dm.uba.ar}
\ead[url]{http://mate.dm.uba.ar/\~{}krick}

\author{A.~Szanto}
\address{Department of Mathematics, North Carolina State
University, Raleigh, NC 27695, USA}
\ead{aszanto@ncsu.edu}
\ead[url]{http://aszanto.math.ncsu.edu}

\author{M.~Valdettaro}
\address{Departamento de Matem\'atica, Facultad de
Ciencias Exactas y Naturales, Universidad de Buenos Aires, Argentina}
\ead{mvaldett@dm.uba.ar}
\ead[url]{http://cms.dm.uba.ar/Members/mvaldettaro/}

\begin{abstract}
In an earlier article together with Carlos D'Andrea~\cite{BDKSV17}, we
described explicit expressions for the coefficients of the order-$d$
polynomial subresultant of $(x-\alpha)^m$ and $(x-\beta)^n $ with respect to
Bernstein's set of polynomials $\{(x-\alpha)^j(x-\beta)^{d-j}, \, 0\le j\le
d\}$, for $0\le d<\min\{m, n\}$. The current paper further develops the study
of these structured polynomials and shows that the coefficients of the
subresultants of $(x-\alpha)^m$ and $(x-\beta)^n$ with respect to the monomial
basis can be computed in \emph{linear} arithmetic complexity, which is faster
than for arbitrary polynomials. The result is obtained as a consequence of the
amazing though seemingly unnoticed fact that these subresultants are scalar
multiples of Jacobi polynomials up to an affine change of variables.
\end{abstract}

\begin{keyword} Subresultants, algorithms, complexity, Jacobi polynomials.

\bigskip \MSC[2010] 13P15 \sep  15B05 \sep 33C05 \sep 33C45 \sep 33F10 \sep 68W30
\end{keyword}
\date{\today}
\end{frontmatter}

\section{Introduction}\label{sec:intro}

Let $\K$ be a field, and let $f=f_mx^m+\cdots + f_0$ and $g=g_nx^n+\cdots +
g_0$ be two polynomials in $\K[x]$ with $f_m\neq0$ and $g_n\neq 0$. Set $0\le
d< \min\{m,n\}$. The {\em order-$d$ subresultant} $\Sres_d(f,g)$ is the
polynomial in $\K[x]$ defined as
\begin{equation}\label{srs}
\Sres_d(f,g):= \det \begin{array}{|cccccc|c}
\multicolumn{6}{c}{\scriptstyle{m+n-2d}}&\\
\cline{1-6}
f_{m} & \cdots & &\cdots &f_{d+1-(n-d-1)} &x^{n-d-1}f& \\
&  \ddots & &&\vdots  & \vdots& \scriptstyle{n-d}\\
& & f_m& \dots &f_{d+1}&f& \\
\cline{1-6}
g_{n} &\cdots & &\cdots  &g_{d+1-(m-d-1)}  &x^{m-d-1}g&\\
&\ddots && &\vdots  &\vdots  &\scriptstyle{m-d}\\
&&g_{n} & \cdots &  g_{d+1} &g&\\
\cline{1-6} \multicolumn{2}{c}{}
\end{array},
\end{equation}
where, by convention, $f_\ell = g_\ell =0$ for $\ell<0$.

The polynomial $\Sres_d(f,g)$ has degree at most~$d$, and each of its
coefficients is {equal to} a minor of the Sylvester matrix of $f$ and $g$. {
In particular the coefficient of $x^d$, called the {\emph{principal
subresultant} of $f$ and $g$}, is given by
\begin{equation*}\label{PSres}
\PSres_d(f,g):= \det \begin{array}{|ccccc|c}
\multicolumn{5}{c}{\scriptstyle{m+n-2d}}&\\
\cline{1-5}
f_{m} & \cdots & &\cdots &f_{d-(n-d-1)}& \\
&  \ddots & & & \vdots& \scriptstyle{n-d}\\
& & f_m &\cdots & f_d& \\
\cline{1-5}
g_{n} &\cdots &  &\cdots &g_{d-(m-d-1)}&\\
&\ddots & &  &\vdots  &\scriptstyle{m-d}\\
& &g_{n} & \cdots &g_d&\\
\cline{1-5} \multicolumn{2}{c}{}
\end{array}.
\end{equation*}}

Subresultants were introduced implicitly by Jacobi~\cite{Jacobi1836} and
explicitly by Sylvester~\cite{Sylvester1839,Sylvester1840}; we refer
to~\cite{Loos83} and~\cite{GaLu03} for detailed historical
accounts\footnote{The Sylvester matrix was defined in~\cite{Sylvester1840},
and the order-$d$ subresultant was introduced
in~\cite{Sylvester1839,Sylvester1840} under the name of ``prime derivative of
the $d$-degree''. The term ``polynomial subresultant'' was seemingly coined by
Collins~\cite{Collins67}, and probably inspired to him by B\^ocher's
textbook~\cite[\S69]{Bocher1907} who had used the word ``subresultants'' to
refer to determinants of certain submatrices of the Sylvester matrix. Almost
simultaneously, Householder and Stewart~\cite{HS67,Householder68} employed the
term ``polynomial bigradients''. The principal subresultants were named ``Nebenresultanten'' (minor resultants) by Habicht~\cite{Habicht48}. The current terminology
\emph{principal subresultants} seems to appear for the first time in Collins'
paper~\cite{Collins74}.
}

Let $\M(n)$ denote the arithmetic complexity of degree-$n$ polynomial
multiplication in~$\K[x]$. Precisely, $\M(n)$ is an upper bound for the
total number of additions/subtractions and products/divisions in the base
field~$\K$ that are sufficient to compute the product of any two polynomials
in~$\K[x]$ of degree at most~$n$. It is classical, see
e.g.~\cite[Ch.~8]{GaGe13}, that $\M(n)= O(n \, \log n \log\log n)$ by using
FFT-based algorithms. For arbitrary polynomials $f,g\in \K[x]$ {of
degree~$n$}, the fastest known algorithms {are able to} compute in $O(\M(n)
\log n)$ arithmetic operations in $\K$ either one selected polynomial
subresultant $\Sres_d(f,g)$~{\cite{Reischert97,LR2001,Lecerf18}}, or all their
principal subresultants $\PSres_d(f,g)$ for $0\le
d<n$~\cite[Cor.~11.18]{GaGe13}. It is an open question whether this can be
improved to $O(\M(n))$, even for the {classical} resultant (the case $d=0$).

In this paper we present algorithms with \emph{linear} complexity for these
two tasks for the special family of polynomials considered in~\cite{BDKSV17},
namely $f=(x-\alpha)^m$ and $g=(x-\beta)^n$ in $\K[x]$, when
$\chara(\K)=0$ or $\chara(\K)\geq \max\{m,n\}$, and $\alpha\ne \beta \in
\K$ (note that when $\alpha=\beta$ there is nothing to compute since all
subresultants vanish). To our knowledge, we are exhibiting the first family of
``structured polynomials'' for which subresultants (and all principal
subresultants) can be computed in optimal arithmetic complexity.

Let us first observe that the resultant
$\Sres_0((x-\alpha)^m,(x-\beta)^n)=(\alpha-\beta)^{mn}$, which corresponds to
the case $d=0$, can be computed by binary powering in $O(\log(mn))$ arithmetic
operations in~$\K$. The general case is not so simple: for example the
particular case $d=1$ of \cite[Theorems~1.1 and~1.2]{BDKSV17} (see also
Theorem~\ref{thm:2} below) shows that, for $1<\min\{m,n\}$,
\begin{align*}		
	\Sres_1((x-\alpha)^m, (x-\beta)^n)&=
(\alpha - \beta)^{(m-1)(n-1)}  \Big(\binom{m+n-2}{m-1}   x \\ & \qquad - \binom{m+n-3}{m-1}\alpha -\binom{m+n-3}{n-1}\beta\Big).
\end{align*}
This identity implies that, from a computational perspective, there is already
a striking difference between the cases $d=0$ and $d=1$. Indeed, although the
term $(\alpha - \beta)^{(m-1)(n-1)}$ can be computed in $O(\log(mn))$
operations in $\K$, no algorithm with arithmetic complexity polynomial in
$\log(mn)$ is known for computing binomial coefficients such
as~$\binom{m+n-2}{m-1}$. However, the right-hand side of the previous identity
can be computed in $O(\min\{m,n\})$ operations (see Lemma~\ref{lem:binomial}
below), provided the characteristic of the base field~$\K$ is zero or large
enough. The main result of the current article extends this complexity
observation to arbitrary $1\le d<\min\{m,n\} $.

\begin{theorem} \label{thm:1}
Let $d,m, n\in \N$ with  $1\le d<\min\{m,n\}$
and let $\K$ be a field with $\chara(\K)=0$ or $\chara(\K)\geq \max\{m,n\}$, and $\alpha,\beta\in \K$ with $\alpha\ne \beta$.
Set
 \begin{equation*}\label{sk}
 \Sres_d((x-\alpha)^m,(x-\beta)^n)=\sum_{k=0}^d s_k\, x^k.
 \end{equation*}
Then,
\begin{enumerate} \item[\emph{(a)}] if $\chara(\K)=0$ or $\chara(\K)\ge m+n-d$, then $s_d\ne 0$ and all the coefficients
$s_k$ for $0\le k\le d$ can be computed using $O(\min\{m,n\}+\log(mn))$
arithmetic operations in~$\K$,
\item[\emph{(b)}] when $\chara(\K)= m+n-d-1$, the following equality holds in $\K$:
\[\Sres_d((x-\alpha)^m,(x-\beta)^n)= (-1)^{md}  (\alpha-\beta)^{(m-d)(n-d)+d} \]
and $\Sres_d((x-\alpha)^m,(x-\beta)^n)$ can be computed using  $O(\log(mn))$ arithmetic operations in~$\K$,
\item[\emph{(c)}]if $m+n-d-1> \chara(\K) \ge \max\{m,n\}$, then
$$\Sres_d((x-\alpha)^m,(x-\beta)^n)=0. $$
\end{enumerate}
\end{theorem}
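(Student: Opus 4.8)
The plan is to reduce everything to the structural fact developed in this paper — already visible in Theorem~\ref{thm:2} — that, after extracting the scalar $(\alpha-\beta)^{(m-d)(n-d)}$, the subresultant is a Jacobi polynomial up to an affine change of variables. Writing $\Sres_d=(\alpha-\beta)^{(m-d)(n-d)}\,Q_d(x)$ and setting $t=\tfrac{2x-\alpha-\beta}{\alpha-\beta}$ (so that $x=\alpha\leftrightarrow t=1$ and $x=\beta\leftrightarrow t=-1$), one has $Q_d(x)=\kappa\,P_d^{(m-d-1,\,n-d-1)}(t)$, equivalently
\[
Q_d(x)=\kappa\binom{m-1}{d}\,\twoFone{-d}{m+n-d-1}{m-d}{\tfrac{x-\alpha}{\beta-\alpha}},\qquad \kappa=\frac{\binom{m+n-2d}{m-d}}{\binom{m+n-2}{d}}\,(\alpha-\beta)^{d}.
\]
The three regimes of the theorem correspond to the three behaviours of this hypergeometric polynomial modulo $p=\chara(\K)$: generically a genuine degree-$d$ polynomial, at $p=m+n-d-1$ a collapse to a constant, and in the intermediate range a total vanishing.

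For part~(a) I would first compute $(\alpha-\beta)^{(m-d)(n-d)}$ by binary powering in $O(\log(mn))$ operations, the exponent being at most $mn$. Summing the Bernstein coefficients of Theorem~\ref{thm:2} via Vandermonde gives the leading coefficient $s_d=(\alpha-\beta)^{(m-d)(n-d)}\binom{m+n-2d}{m-d}$; by Kummer's theorem the binomial is a unit modulo $p$ precisely because $m+n-2d<m+n-d\le p$ forbids any base-$p$ carry in $(m-d)+(n-d)$, whence $s_d\neq0$. Lemma~\ref{lem:binomial} produces this single binomial in $O(\min\{m,n\})$ operations. To obtain the remaining $s_k$ in linear time I would use that $Q_d$, being an affine pullback of a Jacobi polynomial, satisfies a second-order linear ODE with polynomial coefficients; extracting the coefficient of $x^k$ yields a three-term recurrence
\[
a_0(k+2)(k+1)\,s_{k+2}+(k+1)(a_1k+b_0)\,s_{k+1}+a_2(k-d)\bigl(k+m+n-d-1\bigr)\,s_k=0 .
\]
Seeding it with $s_{d+1}=0$ and the computed $s_d$, and running it downwards, computes all coefficients in $O(d)=O(\min\{m,n\})$ further operations, for the announced total. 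The coefficient of $s_k$ vanishes only at $k=d$ and at the negative integer $-(m+n-d-1)$, so it is invertible for $0\le k\le d-1$ in characteristic zero and whenever $\chara(\K)\ge m+n-d$, \emph{except} possibly when $\chara(\K)\in\{m+n-d,\dots,m+n-2\}$, for which a denominator could still vanish at one index; recovering that single $s_k$ by a complementary relation is the one delicate point of~(a).

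For part~(b) I would specialise to $p=m+n-d-1$. Since the upper Pochhammer parameter is exactly $m+n-d-1=d+(m-d-1)+(n-d-1)+1$, every term $(m+n-d-1)_k$ with $k\ge1$ is divisible by $p$, the hypergeometric series collapses to its constant term, and $Q_d\equiv\kappa\binom{m-1}{d}\pmod p$. It then remains to prove the congruence
\[
\frac{\binom{m+n-2d}{m-d}\binom{m-1}{d}}{\binom{m+n-2}{d}}\equiv(-1)^{md}\pmod{m+n-d-1},
\]
which yields $\Sres_d\equiv(-1)^{md}(\alpha-\beta)^{(m-d)(n-d)+d}$. I would establish it from Lucas' theorem — the numerators are $\equiv p-1$ and reduce to signs — together with a Kummer count of the $p$-adic valuations hidden in $\kappa$. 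Equivalently, and perhaps more transparently, one runs the reduction directly on the integer Bernstein coefficients of Theorem~\ref{thm:2}: modulo $p$ they become $\binom{d}{j}$ up to a global sign, so the Bernstein expansion telescopes through $(x-\alpha)-(x-\beta)=\beta-\alpha$ to the stated pure power of $\alpha-\beta$.

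For part~(c), with $\max\{m,n\}\le p<m+n-d-1$, I would show $Q_d\equiv0$, i.e.\ that every coefficient of $\Sres_d$ vanishes. Each Bernstein coefficient has the shape $G\,\binom{m-1}{d-j}\binom{n-1}{j}$ with $G=\binom{m+n-2d}{m-d}/\binom{m+n-2}{d}$; the two small binomials are $p$-adic units since $m-1,n-1<p$ and $d<p$, so the divisibility must come from $G$. By Kummer's theorem the hypotheses $p\ge\max\{m,n\}$ and $p\le m+n-d-2$ force a base-$p$ carry in the numerator binomial of $G$ that is unmatched in the denominator, giving $v_p(G)\ge1$ and hence $p\mid$ every coefficient. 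The bookkeeping of these carries across the whole admissible range of $p$ is the main obstacle, and it is the mirror image of the \emph{non}-vanishing argument for $s_d$ in~(a): parts~(a), (b), (c) are ultimately one uniform analysis of when the carries in the binomial coefficients of Theorem~\ref{thm:2} kick in, governed by the threshold $m+n-d-1$.
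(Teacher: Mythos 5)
Your architecture is the paper's own: identify $\Sres_d$ with a Jacobi polynomial (Theorem~\ref{thm:2}), convert the Jacobi ODE into a three-term recurrence on the $s_k$ (Theorem~\ref{thm:3}), seed it with $s_d$ and run it downwards for (a), and read off (b) and (c) modulo $p$. But the concrete identification you start from is wrong, and the error contaminates every part. You claim $\Sres_d=(\alpha-\beta)^{(m-d)(n-d)}Q_d$ with $Q_d\propto P_d^{(m-d-1,\,n-d-1)}(t)$; the correct statement (Theorem~\ref{thm:2}) involves $P_d^{(-n,-m)}(t)$, and these two Jacobi polynomials are \emph{not} proportional --- they satisfy different second-order ODEs. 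Test case $m=n=3$, $d=2$: here $\Sres_2=(x-\beta)^3-(x-\alpha)^3$, and under $t=\frac{2x-\alpha-\beta}{\beta-\alpha}$ one gets $\Sres_2=(\alpha-\beta)^3\,\frac{3t^2+1}{4}=(\alpha-\beta)^3P_2^{(-3,-3)}(t)$, whereas $P_2^{(0,0)}(t)=\frac{3t^2-1}{2}$. Consequently your leading-coefficient formula is false: the true value is $s_2=3(\alpha-\beta)$, while your $(\alpha-\beta)\binom{m+n-2d}{m-d}$ gives $2(\alpha-\beta)$. The correct $s_d$, Identity~\eqref{sd}, is $(\alpha-\beta)^{(m-d)(n-d)}\prod_{i=1}^d\frac{(i-1)!\,(m+n-d-i)!}{(m-i)!\,(n-i)!}$, a product of $d$ factorial ratios; it equals your single binomial only when $d=1$, which is presumably where the guess was extrapolated from.

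The downstream effects are fatal, not cosmetic. In (a), your recurrence puts the coefficient $(k-d)(k+m+n-d-1)$ on $s_k$ --- exactly what the ODE of $P_d^{(m-d-1,\,n-d-1)}$ produces --- whereas the correct ODE~\eqref{eq:diff} yields $(d-k)(m+n-d-k-1)$ as in Theorem~\ref{thm:3}; for $0\le k\le d-1$ these factors lie in $\{1,\dots,d\}$ and $\{m+n-2d,\dots,m+n-d-1\}$, all strictly smaller than $m+n-d\le\chara(\K)$, hence invertible. So the ``delicate point'' you leave unresolved (a denominator possibly vanishing when $\chara(\K)\in\{m+n-d,\dots,m+n-2\}$) is purely an artifact of your sign error; with the correct recurrence no exceptional index exists and the $O(\min\{m,n\}+\log(mn))$ count goes through. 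In (b), your target congruence is not even well posed: for $p=m+n-d-1$, Lucas' theorem gives $\binom{m+n-2}{d}=\binom{p+d-1}{d}\equiv0\pmod p$, so your $\kappa$ has negative $p$-adic valuation whenever $d\ge2$ (in the test case $\kappa\binom{m-1}{d}=\frac{1}{3}(\alpha-\beta)^2$, meaningless mod $3$). The collapse-to-a-constant idea is sound in spirit --- the paper proves $p\mid s_k$ for $1\le k\le d$ via the correct recurrence and then evaluates~\eqref{eq:SP} at $x=\alpha$ using $\binom{p-1}{\ell}\equiv(-1)^\ell$ --- but it cannot be executed with your constants. Part (c) rests on the same wrong coefficient shape and is explicitly left as ``bookkeeping of carries''; the paper's argument is one line: the numerator of $s_d/p_d$ in~\eqref{s_d/p_d} is a multiple of $(m+n-d-2)!$, hence of $p$, while its denominator is coprime to $p$. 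In short: right skeleton, but the Jacobi parameters, the seed $s_d$, the recurrence, and both mod-$p$ computations must all be redone with $P_d^{(-n,-m)}$ in place of $P_d^{(m-d-1,\,n-d-1)}$.
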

We prove Theorem~\ref{thm:1} via an amazing (and seemingly previously
unobserved) close connection of the subresultants
$\Sres_d((x-\alpha)^m,(x-\beta)^n)$ with the classical family of orthogonal
polynomials known as {the} \emph{Jacobi polynomials}, introduced and studied
by Jacobi in his posthumous article~\cite{Jacobi1859}. This allows us to
produce a recurrence for the coefficients of the subresultant, which is
derived from the differential equation satisfied by the Jacobi polynomial, and
hence by the subresultant.

To express the polynomial subresultants $\Sres_d((x-\alpha)^m,(x-\beta)^n)$ as
Jacobi polynomials, let us recall \cite[Chapter 4]{Szeg75} that for any
$k,\ell,r\in \Z$ with $r \geq 0$, the Jacobi polynomial $P_r^{(k,\ell)}(x)$
can be defined in $\frac12 \mathbb{Z}[x]$, and thus also in $\K[x]$ for any
abstract field $\K$ with $\chara(\K) \neq 2$, in two equivalent ways:
\begin{itemize}
 \item by Rodrigues' formula
        \[ P_r^{(k,  {\ell})}(x):=\dfrac{(-1)^{r}}{2^r \, r!} (1-x)^{-k} (1+x)^{-{\ell}} \dfrac{\partial^r}{\partial x^r}\left[ (1-x)^{k+r} (1+x)^{ {\ell}+r} \right],\]
 \item as a hypergeometric sum:
         \[P_r^{(k, {\ell})}(x):=\sum_{j=0}^r \frac{(k+r-j+1)_j}{j!}\frac{({\ell}+j+1)_{r-j}}{(r-j)!} \left(\dfrac{x-1}{2}\right)^{r-j} \left(\dfrac{x+1}{2}\right)^j,\]
          where for any $a\in \Z$, $(a)_0:=1 $ and $(a)_j:=a(a+1)\cdots (a+j-1)$ for
$j\ge 1$ denotes the $j$th Pochhammer symbol, or the rising factorial,
of~$a$.
\end{itemize}
Our  next result asserts that the $d$-th subresultant
of $(x-\alpha)^m$ and $(x-\beta)^n$ coincides, up to an explicit
multiplicative constant and up to an affine change of variables, with the
Jacobi polynomial $P_d^{(-n,-m)}(x)$.
More precisely, for $\alpha\neq\beta$, we consider the following change  of variables in  the Jacobi polynomial
\begin{align}\label{integerPd}P_d^{(-n,-m)}&\left(\frac{(x-\alpha)+(x-\beta)}{\beta-\alpha}\right)=\\
&\sum_{j=0}^d\binom{n-d+j-1}{j}\binom{m-j-1}{d-j}\frac{(x-\alpha)^j(x-\beta)^{d-j}}{(\alpha-\beta)^d},
\nonumber
\end{align} and note that it belongs to
$\frac{1}{(\alpha-\beta)^d}\Z[x-\alpha,x-\beta]$ when we consider $\alpha$ and $\beta$ as distinct indeterminates over $\Z$.
 We denote by~$p_d$  its  coefficient of~$x^d$, for which we show in~\eqref{p=lead} below that
\begin{equation}\label{pd}
p_d= \frac{1}{(\alpha-\beta)^d} \binom{m+n-d-1}{d}.
\end{equation}
We also recall that, following the notation in Theorem~\ref{thm:1}, the principal subresultant $s_d:=\PSres_d((x-\alpha)^m, (x-\beta)^n )$  is the  coefficient   of $x^d$  in $\Sres_d((x-\alpha)^m, (x-\beta)^n)$, which by
 \cite[Proposition~3.3]{BDKSV17}  satisfies
\begin{equation}\label{sd}
s_d=(\alpha-\beta)^{(m-d)(n-d)}\prod_{i=1}^{d}r(i)\ \mbox{ with }\ r(i):=\frac{(i-1)!(m+n-d-i)!}{(m-i)!(n-i)!}.
\end{equation}
As a consequence, $s_d$ belongs to $\Q[\alpha-\beta] \cap \Z[\alpha,\beta]=
\Z[\alpha-\beta]$. In fact it is shown in~\cite[Theorem~1.1]{BDKSV17} that the
whole polynomial $\Sres_d((x-\alpha)^m,(x-\beta)^n)$ belongs to
$\Z[x-\alpha,x-\beta]$ (see also Lemma~\ref{lem:intsubres} below for an
independent proof). Denote by $Q_d^{(-n,-m)}$ the following polynomial
\[Q_d^{(-n,-m)}\left( \alpha, \beta, x\right) := \frac{s_d\cdot
P_d^{(-n,-m)}\left( \dfrac{(x-\alpha)+(x-\beta)}{\beta-\alpha}\right)}{p_d}.\]
Since $\alpha-\beta = (x-\beta) - (x-\alpha)$, the polynomial
$Q_d^{(-n,-m)}(\alpha,\beta,x)$ belongs a priori to $\Q[x-\alpha,x-\beta]$. We
will show that $Q_d^{(-n,-m)}(\alpha,\beta,x)$ actually coincides with
$\Sres_d((x-\alpha)^m,(x-\beta)^n) $ in $ \Z[x-\alpha,x-\beta]$ and we then
obtain, via the map $1_\Z\to 1_\K$, the following result:

\begin{theorem}\label{thm:2}
Let $\K$ be a field and $\alpha,\beta \in \K$ with $\alpha\ne \beta$. Set
$d,m, n\in \N$ with $0\le d<\min\{m,n\}$. Then, with the notation
in~\eqref{integerPd} and~\eqref{sd},
\begin{equation}\label{eq:Sres=Jac}	  \Sres_d((x-\alpha)^m, (x-\beta)^n)=
Q_d^{(-n,-m)}\left( \alpha, \beta, x\right).
\end{equation}
\end{theorem}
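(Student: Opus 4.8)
The plan is to follow the route indicated just before the statement: prove the identity in the polynomial ring $\Z[x-\alpha,x-\beta]$, with $\alpha$ and $\beta$ regarded as distinct indeterminates over $\Z$, and then specialize to an arbitrary field $\K$ (with any $\alpha\ne\beta$) through the canonical homomorphism $\Z\to\K$ sending $1_\Z$ to $1_\K$, applied coefficient by coefficient. This is legitimate because both sides are integral: the subresultant $\Sres_d((x-\alpha)^m,(x-\beta)^n)$ lies in $\Z[x-\alpha,x-\beta]$ by \cite[Theorem~1.1]{BDKSV17} (or by Lemma~\ref{lem:intsubres}), while clearing the factor $(\alpha-\beta)^d$ in the definition of $Q_d^{(-n,-m)}$ shows the same for the right-hand side once the equality is known. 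It therefore suffices to match the two polynomials coefficient by coefficient over $\Z$.

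I would carry out the comparison in Bernstein's basis $\{(x-\alpha)^j(x-\beta)^{d-j}:0\le j\le d\}$. On the Jacobi side, \eqref{integerPd} already records the Bernstein expansion of $P_d^{(-n,-m)}\bigl(\tfrac{(x-\alpha)+(x-\beta)}{\beta-\alpha}\bigr)$; its coefficient of $(x-\alpha)^j(x-\beta)^{d-j}$ is $\tfrac{1}{(\alpha-\beta)^d}\binom{n-d+j-1}{j}\binom{m-j-1}{d-j}$. (If one prefers a self-contained derivation of \eqref{integerPd}, it follows from the hypergeometric definition of $P_d^{(-n,-m)}$ by substituting $\tfrac{t-1}{2}=\tfrac{x-\beta}{\beta-\alpha}$ and $\tfrac{t+1}{2}=\tfrac{x-\alpha}{\beta-\alpha}$ for $t=\tfrac{(x-\alpha)+(x-\beta)}{\beta-\alpha}$, and rewriting the two Pochhammer quotients as $\binom{d-n}{j}=(-1)^j\binom{n-d+j-1}{j}$ and $\binom{d-m}{d-j}=(-1)^{d-j}\binom{m-j-1}{d-j}$; the two sign factors $(-1)^d$ then cancel.) Multiplying through by $s_d/p_d$ yields the Bernstein coefficients of $Q_d^{(-n,-m)}(\alpha,\beta,x)$.

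Next I would read off the Bernstein coefficients $c_j$ of $\Sres_d((x-\alpha)^m,(x-\beta)^n)$ from the explicit formula of \cite[Theorem~1.1]{BDKSV17}, and prove the index-by-index identity $c_j=\tfrac{s_d}{p_d(\alpha-\beta)^d}\binom{n-d+j-1}{j}\binom{m-j-1}{d-j}$ for every $0\le j\le d$. Equivalently, using \eqref{sd} and \eqref{pd}, this amounts to verifying $\binom{m+n-d-1}{d}\,c_j=s_d\binom{n-d+j-1}{j}\binom{m-j-1}{d-j}$, a purely combinatorial identity among binomial coefficients (the case $d=1$ being exactly the identity displayed in the Introduction). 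The scalar $s_d/p_d$ is moreover forced upon us by a single coefficient: since each basis element is monic of degree $d$, summing over $j$ gives the coefficient of $x^d$, which equals the principal subresultant $s_d$ on the subresultant side and equals $\tfrac{s_d}{p_d}\cdot p_d=s_d$ on the Jacobi side, so the normalization built into the definition of $Q_d^{(-n,-m)}$ makes the leading coefficients agree automatically.

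The main obstacle is precisely this coefficientwise binomial identity: one must reconcile the shape of the Bernstein coefficients coming out of the Sylvester-determinant formula of \cite{BDKSV17} with the products $\binom{n-d+j-1}{j}\binom{m-j-1}{d-j}$ dictated by the hypergeometric expansion of the Jacobi polynomial. This is where the ``amazing connection'' advertised in the abstract is actually cashed out as an explicit equality of binomial coefficients. Once it is in place for all $j$, the equality holds in $\Z[x-\alpha,x-\beta]$, and the specialization $\Z\to\K$ completes the proof; in particular this also establishes the integrality of $Q_d^{(-n,-m)}(\alpha,\beta,x)$ claimed above.
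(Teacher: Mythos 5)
Your proposal is correct, but it follows a genuinely different route from the paper's own proof of Theorem~\ref{thm:2}: what you describe is essentially the equivalence argument that the paper gives later, in Section~\ref{sec:previous}, between Theorem~\ref{thm:2} and the main result of \cite{BDKSV17} (restated there as Theorem~\ref{th:old}). You take the Bernstein-basis formula of \cite[Theorem~1.1]{BDKSV17} as a black box, expand the shifted Jacobi polynomial in the same basis (your derivation of~\eqref{integerPd} from the hypergeometric sum, via $\binom{d-n}{j}=(-1)^j\binom{n-d+j-1}{j}$ and $\binom{d-m}{d-j}=(-1)^{d-j}\binom{m-j-1}{d-j}$, is correct), observe that the normalization $s_d/p_d$ makes the coefficients of $x^d$ agree, and reduce the theorem to a coefficientwise binomial identity followed by specialization $1_\Z\to1_\K$. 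The one step you leave undone---the identity you call ``the main obstacle''---is in fact routine and does close: substituting $c_j(m,n,d)=\frac{\binom{d}{j}\binom{n-d+j-1}{j}}{\binom{m-1}{j}}\,c_0(m,n,d)$ from Theorem~\ref{th:old}, it collapses after cancellation to $\frac{(m-1)!}{(m-d-1)!}\prod_{i=1}^d\frac{i!}{(m-i)!}=d!\,\prod_{i=1}^d\frac{(i-1)!}{(m-i-1)!}$, which is immediate; this is exactly the computation carried out in Section~\ref{sec:previous}, so there is no gap. The trade-off is that your route rests entirely on the nontrivial determinantal theorem of \cite{BDKSV17}, whereas the proof the paper presents in Section~\ref{sec:thm3} is independent of that paper: it shows that $P_{d}^{(-n,-m)}$, $(x-\alpha)^mP_{n-d-1}^{(-n,m)}$ and $(x-\beta)^nP_{m-d-1}^{(n,-m)}$ (all under the same affine change of variables) satisfy one second-order differential equation~\eqref{eq:diff}, hence are linearly dependent, and then invokes Lemma~\ref{lutil} to identify the Jacobi polynomial as a scalar multiple of $\Sres_d((x-\alpha)^m,(x-\beta)^n)$, the scalar being fixed by the leading coefficient $p_d/s_d$. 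That structural argument buys more than the identity itself: it produces the B\'ezout cofactors $F_d,G_d$ as Jacobi polynomials (Corollary~\ref{coro:cofact}) and the differential equation that drives the recurrence of Theorem~\ref{thm:3}, neither of which falls out of your coefficient comparison.
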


The key ingredient to prove Theorem~\ref{thm:1} will be to derive from
Theorem~\ref{thm:2} a second-order recurrence satisfied by the coefficients of
$\Sres_d((x-\alpha)^m,(x-\beta)^n)$ in the monomial basis, as follows:

\begin{theorem}\label{thm:3}
Let $\K$ be a field  and $\alpha,\beta\in \K$ with $\alpha\ne \beta$.
Set $d,m, n\in \N$ with  $0\le d<\min\{m,n\}$ and let
 \begin{equation*}\label{sk2}
 \Sres_d((x-\alpha)^m,(x-\beta)^n)=\sum_{k=0}^d s_k \,x^k.
 \end{equation*}
Then, when $\chara(\K)=0$ or $\chara(\K)\ge m+n-d$, for $s_{d+1}:=0$ and for
$s_d$ as defined in~\eqref{sd}, the following second-order linear recurrence
is satisfied by the coefficients $s_k$, for $ k= d-1,\dots,0$:
\begin{equation}\label{eq:rec}
s_k =\dfrac{ -(k+1)\Big(\big((n-k-1)\alpha+(m-k-1)\beta\big)s_{k+1}+
(k+2)\alpha\beta s_{k+2}\Big) }{(d-k)(m+n-d-k-1)}.
\end{equation}
\end{theorem}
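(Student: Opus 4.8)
The plan is to exploit the identification in Theorem~\ref{thm:2} of the subresultant with a Jacobi polynomial, and to transport the classical second-order differential equation satisfied by Jacobi polynomials to the subresultant itself. Once $S(x):=\Sres_d((x-\alpha)^m,(x-\beta)^n)$ is known to satisfy an explicit linear ODE with polynomial coefficients, reading off the coefficient of $x^k$ will yield the recurrence~\eqref{eq:rec} directly. I would first recall that $y(t)=P_d^{(-n,-m)}(t)$ satisfies the Jacobi differential equation, which for the parameters $(-n,-m)$ reads
\begin{equation*}
(1-t^2)\,y'' + \big[(n-m)+(m+n-2)\,t\big]\,y' + d\,(d-m-n+1)\,y = 0.
\end{equation*}

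By Theorem~\ref{thm:2} we have $S(x)=c\,y\!\left(t(x)\right)$ for the constant $c=s_d/p_d$ and the affine substitution $t(x)=\dfrac{(x-\alpha)+(x-\beta)}{\beta-\alpha}=\dfrac{2x-\alpha-\beta}{\beta-\alpha}$. Since the equation above is linear and homogeneous, the constant $c$ plays no role, and the crux is to substitute $t=t(x)$ and apply the chain rule with $t'=2/(\beta-\alpha)$ and $t''=0$. The decisive simplification is that $1-t=\dfrac{2(x-\beta)}{\alpha-\beta}$ and $1+t=\dfrac{2(x-\alpha)}{\beta-\alpha}$, whence
\begin{equation*}
1-t^2 = \frac{-4\,(x-\alpha)(x-\beta)}{(\beta-\alpha)^2},
\end{equation*}
so that $(1-t^2)/(t')^2=-(x-\alpha)(x-\beta)$; a parallel computation turns the middle coefficient into $\big[(n-m)+(m+n-2)t\big]/t'=(m+n-2)x-(n-1)\alpha-(m-1)\beta$. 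After multiplying through by $-1$ I therefore expect $S$ to satisfy
\begin{equation*}
(x-\alpha)(x-\beta)\,S'' - \big[(m+n-2)x-(n-1)\alpha-(m-1)\beta\big]\,S' + d\,(m+n-d-1)\,S = 0.
\end{equation*}

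With this ODE in hand, I would expand $S=\sum_{k=0}^d s_k x^k$, use $(x-\alpha)(x-\beta)=x^2-(\alpha+\beta)x+\alpha\beta$, and collect the coefficient of $x^k$. A short bookkeeping shows that the coefficient of $s_{k+2}$ is $(k+1)(k+2)\alpha\beta$, the coefficient of $s_{k+1}$ is $(k+1)\big[(n-k-1)\alpha+(m-k-1)\beta\big]$, and the coefficient of $s_k$ is $k^2+(1-m-n)k+d(m+n-d-1)$, whose two roots $k=d$ and $k=m+n-d-1$ give the factorization $(d-k)(m+n-d-k-1)$. Equating the coefficient of $x^k$ to zero and solving for $s_k$ then reproduces~\eqref{eq:rec} verbatim, with $s_{d+1}=0$ forced by $\deg S\le d$ and $s_d$ its leading coefficient as in~\eqref{sd}.

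The main obstacle is the careful transport of the ODE through the affine change of variables: one must track the sign of $\beta-\alpha$ and the chain-rule factors so that the transformed equation has coefficients in $\Z[\alpha,\beta][x]$ with exactly the clean shape above. Once this is established over $\Q$, the resulting relation $(d-k)(m+n-d-k-1)\,s_k+(k+1)[\cdots]\,s_{k+1}+(k+1)(k+2)\alpha\beta\,s_{k+2}=0$ is an identity in $\Z[\alpha,\beta]$ (recall the $s_k$ lie in $\Z[\alpha,\beta]$), which specializes to $\K$ via $1_\Z\mapsto 1_\K$. Solving for $s_k$ as in~\eqref{eq:rec} requires dividing by $(d-k)(m+n-d-k-1)$; this is a unit in $\K$ precisely because for $0\le k\le d-1$ one has $1\le d-k\le d<\min\{m,n\}$ and $1\le m+n-2d\le m+n-d-k-1\le m+n-d-1$, so both factors lie in $\{1,\dots,m+n-d-1\}$ and hence are nonzero in $\K$ under the hypothesis $\chara(\K)=0$ or $\chara(\K)\ge m+n-d$.
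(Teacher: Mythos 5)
Your proposal is correct and follows essentially the same route as the paper: invoke Theorem~\ref{thm:2}, transport the Jacobi differential equation through the affine substitution $t=\frac{2x-\alpha-\beta}{\beta-\alpha}$ to obtain the ODE~\eqref{eq:diff} satisfied by the subresultant, extract the coefficient of $x^k$ to get the linear relation $(d-k)(m+n-d-k-1)s_k+(k+1)\big[(n-k-1)\alpha+(m-k-1)\beta\big]s_{k+1}+(k+1)(k+2)\alpha\beta\,s_{k+2}=0$, and specialize via $1_\Z\to 1_\K$ using that all divisors are positive integers less than $m+n-d$. The only cosmetic difference is that the paper establishes the transformed ODE inside its proof of Theorem~\ref{thm:2} and cites it, whereas you redo the chain-rule computation explicitly; the content is identical.
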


Our next result concerns the complexity of the computation of all principal
subresultants $\PSres_d((x-\alpha)^m,(x-\beta)^n)$ for $0\le d<\min\{m,n\}$.
We note that the proof of this result is independent from our previous
results, as it is a consequence of a recurrence that is derived directly
from~\eqref{sd}. We give it here for sake of completeness of our complexity
results.

\begin{theorem}\label{thm:5}
Let $\K$ be a field, let $m, n\in \N$ and assume $\chara(\K)=0$ or
$\chara(\K)\ge m+n$. Let $\alpha,\beta \in \K$. Then one can compute all the
principal subresultants $\PSres_d((x-\alpha)^m,(x-\beta)^n)\in \K$ for $0\le d
< \min\{m,n\}$ using $O(\min\{m,n\}+\log(mn))$ operations in $\K$.
\end{theorem}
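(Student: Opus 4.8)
The plan is to prove Theorem~\ref{thm:5} independently of Theorems~\ref{thm:1}--\ref{thm:3}, working directly from the closed product formula~\eqref{sd} for the principal subresultant $s_d=\PSres_d((x-\alpha)^m,(x-\beta)^n)$. Writing $s_d=(\alpha-\beta)^{(m-d)(n-d)}\prod_{i=1}^d r(i)$, I would first establish a \emph{second-order} recurrence linking three consecutive principal subresultants, obtained by forming the combination $s_d\,s_{d-2}/s_{d-1}^2$. The point of this particular ratio is that both the transcendental factor and the factorial products telescope almost completely: the exponent $(m-d)(n-d)$ is quadratic in $d$, so its second difference is the constant $2$, contributing only a factor $(\alpha-\beta)^2$; and the ratio of consecutive products $t_d/t_{d-1}$ (with $t_d:=\prod_{i=1}^d r(i)$) collapses to a ratio of a bounded number of factorials, whose own successive ratio is a fixed rational function of~$d$.

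Carrying out this (routine) factorial simplification, I expect to arrive at
\[
\frac{s_d\, s_{d-2}}{s_{d-1}^{2}} = (\alpha-\beta)^{2}\,
\frac{(d-1)(m-d+1)(n-d+1)(m+n-d+1)}{(m+n-2d+1)(m+n-2d+2)^{2}(m+n-2d+3)},
\]
valid for $2\le d<\min\{m,n\}$, which I then rewrite as a recurrence expressing $s_d$ in terms of $s_{d-1}^{2}/s_{d-2}$ times the displayed rational factor. Each step of this recurrence costs $O(1)$ operations in~$\K$ once $(\alpha-\beta)^{2}$ has been precomputed, so running it for $d=2,\dots,\min\{m,n\}-1$ costs $O(\min\{m,n\})$ operations in total.

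For the initialization I would compute $s_0=(\alpha-\beta)^{mn}$ and the power $(\alpha-\beta)^{(m-1)(n-1)}$ by binary powering in $O(\log(mn))$ operations, and the remaining combinatorial quantity $s_1=(\alpha-\beta)^{(m-1)(n-1)}\binom{m+n-2}{m-1}$ by invoking Lemma~\ref{lem:binomial} to obtain $\binom{m+n-2}{m-1}$ in $O(\min\{m,n\})$ operations. Summing the three contributions yields the claimed bound $O(\min\{m,n\}+\log(mn))$; the degenerate cases $\min\{m,n\}\in\{1,2\}$, where the recurrence is vacuous, are handled by the initialization alone.

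The step I expect to require the most care is not the algebra but the verification that the recurrence is legitimate in positive characteristic. Division by $s_{d-2}$ is allowed only if every principal subresultant of index $<\min\{m,n\}$ is nonzero, and division by the denominator above requires each of $m+n-2d+1$, $m+n-2d+2$, $m+n-2d+3$ to be a unit in~$\K$. I would check directly from~\eqref{sd} that, under the hypothesis $\chara(\K)=0$ or $\chara(\K)\ge m+n$, every factorial argument and every linear factor occurring here is a positive integer lying in the range $[1,m+n-1]$, hence nonzero modulo $\chara(\K)$; this simultaneously shows $s_{d-2}\in\K^{\times}$ and that all divisions, both in the initialization and in the recurrence, are well defined, and it pins down $\chara(\K)\ge m+n$ as exactly the right hypothesis.
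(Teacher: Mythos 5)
Your proposal is correct and is essentially the paper's own proof in lightly repackaged form. Your key identity is precisely Lemma~\ref{lem:uv} of the paper: writing $s_d=h(d)\,c(d)$ with $h(d):=(\alpha-\beta)^{(m-d)(n-d)}$ and $c(d):=\prod_{i=1}^{d}r_d(i)$, the paper proves $c(d)\,c(d-2)/c(d-1)^2=v(d-1)$, which is exactly your rational factor, and your extra $(\alpha-\beta)^2$ is the second difference of the exponent $(m-d)(n-d)$; thus your ratio $s_d\,s_{d-2}/s_{d-1}^2$ is simply the product of the two second-order multiplicative recurrences that the paper runs \emph{separately} (via $u$ and $v$ for the integer factor $c(d)$, via $\gamma$ for the power factor $h(d)$). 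The initialization ($s_0$ and $(\alpha-\beta)^{(m-1)(n-1)}$ by binary powering, $\binom{m+n-2}{m-1}$ by Lemma~\ref{lem:binomial}) and the complexity accounting also coincide with the paper's.

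The one point where your write-up is actually wrong is the nonvanishing claim. Merging the two factors forces you to divide by $s_{d-2}$ at every step, and you assert that the hypothesis on $\chara(\K)$ alone yields $s_{d-2}\in\K^{\times}$. It does not: Theorem~\ref{thm:5} does not assume $\alpha\neq\beta$, and if $\alpha=\beta$ then every $s_d$ with $0\le d<\min\{m,n\}$ vanishes because of the factor $(\alpha-\beta)^{(m-d)(n-d)}$, so your recurrence divides by zero. The repair is a one-line triviality (test for $\alpha=\beta$ and output zeros), and in fairness the paper's algorithm needs the same dispatch, since it inverts $\alpha-\beta$ to form $\gamma(0)=(\alpha-\beta)^{1-m-n}$; the paper discards that case in its introduction. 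With this caveat added, your verification goes through: for $2\le d<\min\{m,n\}$ the three denominators $m+n-2d+1$, $m+n-2d+2$, $m+n-2d+3$ lie in $[3,m+n-1]$, and the integer factor $c(d-2)$ of $s_{d-2}$ is a ratio of products of integers in $[1,m+n-1]$, hence a unit in $\K$ when $\chara(\K)=0$ or $\chara(\K)\ge m+n$, so all divisions are legitimate and the claimed $O(\min\{m,n\}+\log(mn))$ bound follows.
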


In the current article, we repeatedly use the crucial fact that, for
\emph{structured} algebraic objects, one can obtain improved complexity
results by using recurrence relations that these objects obey, rather than
just computing them independently. This is one of the strength of our results:
not only they provide nice formulae for the subresultants, but they also
exploit their particular structure in order to design efficient algorithms.

This work has an interesting story. While working on the paper \cite{BDKSV17},
we first realized that \cite[Theorems 1.1 and 1.2]{BDKSV17} (see
Theorem~\ref{th:old} below) implies the linear recurrence on the coefficients
of $\Sres_d((x-\alpha)^m,(x-\beta)^n)$ in the usual monomial basis described
in Theorem~\ref{thm:3}. This recurrence was initially found using a
computer-driven ``guess-and-prove'' approach, where the guessing part relied
on algorithmic {\emph{Hermite-Pad\'e approximation}}~\cite{gfun}, and where
the proving part relied on Zeilberger's \emph{creative telescoping}
algorithm~\cite{Zeilberger90,WiZe}. From this we derived a first proof of our
complexity result (Theorem~\ref{thm:1}). Shortly after that, by studying the
differential equation attached to this recurrence, we realized that it has a
basis of solutions of hypergeometric polynomials, which appeared to be Jacobi
polynomials. We have then obtained an indirect and quite involved proof of
Theorem~\ref{thm:2} and of Theorem~\ref{thm:3} based on manipulations of
hypergeometric functions, notably on the Chu-Vandermonde identity, much
inspired by an experimental mathematics approach. The proof that we choose to
present in this article is the shortest and the simplest that we could find.
It is chronologically the latest proof of our results, and the one which
provides the deepest structural insight. This proof was obtained by applying
some classical results and the fact that any polynomial that can be written as
a polynomial combination of $f$ and $g$ in $\K[x]$ with given degree bounds is
in fact a constant multiple of the subresultant of $f$ and $g$: we prove that
the Jacobi polynomial can indeed be expressed as such a combination of
$(x-\alpha)^m$ and $(x-\beta)^n$, and we determine the scalar multiple that
gives the subresultant. To conclude this introduction, we want to stress here
the importance of the interaction between computer science and classical
mathematics, which allowed us to guess and prove all our statements using the
computer, before finding a short and elegant human proof.

The paper is organized as follows: We first derive Theorems~\ref{thm:2}
and~\ref{thm:3} in Section~\ref{sec:thm3}. Section~\ref{sec:thm1} is dedicated
to the proof of Theorem~\ref{thm:1}, while in Section~\ref{sec:thm5} we prove
Theorem~\ref{thm:5}. Section~\ref{sec:previous} explains the connection of our
results with previous work, notably the relationship with classical results on
Pad\'e approximation. We conclude the paper with various remarks, experimental
results and perspectives in Section \ref{sec:final}.

A preliminary version of this work is part of the doctoral thesis of Marcelo
Valdettaro \cite{MVthesis}.

\paragraph*{Acknowledgements} \noindent We thank Christian Krattenthaler for
precious help with hypergeometric identities during an early stage\ of this
work, and to Mohab Safey El Din for generously sharing his subresultants
implementations with us. We are also grateful to the referees for helping us
substantially improve the presentation of our results. {T. Krick} and {M.
Valdettaro} were partially supported by \begin{normalsize} {ANPCyT}
{PICT-2013-0294}, {CONICET} {PIP-11220130100073CO}\end{normalsize} and
\begin{normalsize} {UBACyT} {2014-2017-20020130100143BA}. \end{normalsize} A.
Szanto was partially supported by the NSF grants CCF-1813340 and CCF-1217557.

\section{Proofs of Theorem~\ref{thm:2} and Theorem~\ref{thm:3}} \label{sec:thm3}

\subsection{{\it Proof of {Theorem~\ref{thm:2}}.}}

The proof of Theorem~\ref{thm:2} proceeds in 3 steps: (1) We prove the theorem
in the case when $\K$ has characteristic $0$. (2) We show, independently from
\cite{BDKSV17}, that $\Sres_d((x-\alpha)^m,(x-\beta)^n)$ belongs to $
\Z[x-\alpha,x-\beta]$ when we consider both polynomials $(x-\alpha)^m$ and
$(x-\beta)^n$ in $\Z[\alpha,\beta,x]$ for $\alpha,\beta$ new indeterminates
over $\Z$, which implies that $(\alpha-\beta)^d\,p_d$ divides
\[ s_d\cdot (\alpha-\beta)^d  P_d^{(-n,-m)}\left(
\dfrac{(x-\alpha)+(x-\beta)}{\beta-\alpha}\right)\ \mbox{ in } \
\Z[x-\alpha,x-\beta].\]
(Here we multiply both terms by $(\alpha-\beta)^d$ to guarantee that they are
both polynomials in $\Z[x-\alpha,x-\beta]$.) (3) We finally conclude that the
identity stated in Theorem~\ref{thm:2} holds in any characteristic via the map
$1_\Z\to 1_\K$.

We will need the next classical lemma, which follows e.g.
from~\cite[Lemmas~7.7.4 and~7.7.6]{Mishra} and was also a key ingredient in
\cite{BDKSV17}.

\begin{lemma}\label{lutil}  Let $ m,n\in \N$ and $f,\,g\in\K[x]$ {of
degrees} $m$ and $n$ respectively. Set $0\le d < \min\{m,n\}$ and assume
$\Sres_d(f,g)\neq 0$ has degree exactly $d$. If $\cF,\cG\in\K[x]$ with
$\deg(\cF)<n-d,\,\deg(\cG)<m-d$ are such that $h=\cF\,f+\cG\,g$ is a non-zero
polynomial in $\K[x]$ of degree at most $d$, then there exists
$\lambda\in\K\setminus \{0 \}$ satisfying \[h=\lambda\cdot\Sres_d(f,g).\]
\end{lemma}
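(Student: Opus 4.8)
The plan is to recast the statement as a one-dimensionality result in linear algebra, built on two classical properties of the determinant in~\eqref{srs}. First I would introduce the $\K$-vector space
\[ W := \{(\cF,\cG)\in \K[x]^2 : \deg \cF < n-d,\ \deg \cG < m-d\},\]
which has dimension $(n-d)+(m-d)=m+n-2d$, together with the linear evaluation map $\varepsilon\colon W\to \K[x]$, $\varepsilon(\cF,\cG)=\cF\,f+\cG\,g$. Since $\deg(\cF f)$ and $\deg(\cG g)$ are both at most $m+n-d-1$, every image lands in the space of polynomials of degree at most $m+n-d-1$. The object of interest is the subspace
\[ K:=\{(\cF,\cG)\in W : \deg(\cF f+\cG g)\le d\},\]
and the lemma reduces to showing $\dim K=1$ together with the fact that $\Sres_d(f,g)$ is the image of a nonzero element of $K$.

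Next I would record the two classical facts about $\Sres_d(f,g)$ that can be read off from~\eqref{srs}. Expanding the determinant along its last (polynomial) column exhibits $\Sres_d(f,g)=U\,f+V\,g$, where $U,V\in \K[x]$ are assembled from the scalar cofactors and satisfy $\deg U<n-d$, $\deg V<m-d$; thus $(U,V)\in W$. Second, for each $k>d$ the coefficient of $x^k$ in $\Sres_d(f,g)$ equals the determinant obtained by replacing the polynomial column with the column of $x^k$-coefficients, and that column already occurs among the scalar columns, so the determinant vanishes; hence $\deg\Sres_d(f,g)\le d$, and therefore $(U,V)\in K$.

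The heart of the argument is to prove $\dim K=1$. For this I would pass to the square ``coefficient map'' $\widehat T\colon W\to \K^{m+n-2d}$ sending $(\cF,\cG)$ to the tuple of coefficients of $x^{d},x^{d+1},\dots,x^{m+n-d-1}$ in $\cF f+\cG g$; its entries are the scalars $f_{e-i}$ and $g_{e-j}$, so its matrix is exactly the transpose, up to a reordering of columns, of the Sylvester-type matrix whose determinant defines the principal subresultant. This gives $\det \widehat T=\pm\,\PSres_d(f,g)=\pm s_d$. The hypothesis that $\Sres_d(f,g)$ has degree exactly $d$ means precisely that $s_d\ne 0$, so $\widehat T$ is bijective. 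Now $K$ is the preimage under $\widehat T$ of the coordinate line on which every coordinate except the one attached to $x^d$ vanishes; since $\widehat T$ is bijective, $K$ is one-dimensional, and it is spanned by $(U,V)$, because $\widehat T(U,V)=s_d\,e_d\neq 0$ (the higher coefficients of $\Sres_d(f,g)$ vanish by the previous paragraph).

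Finally I would conclude. Given $h=\cF f+\cG g\ne 0$ with $\deg\cF<n-d$, $\deg\cG<m-d$ and $\deg h\le d$, the pair $(\cF,\cG)$ lies in $K$, hence equals $\lambda\,(U,V)$ for some $\lambda\in\K$; applying $\varepsilon$ yields $h=\lambda\,(U f+V g)=\lambda\,\Sres_d(f,g)$, and $\lambda\neq 0$ since $h\neq 0$. I expect the only genuinely delicate point to be the identification $\det\widehat T=\pm s_d$: it is pure bookkeeping, but it requires carefully matching the rows and columns of $\widehat T$ with the scalar block in the determinantal formula for $\PSres_d$, keeping track of the degree ranges and of the sign introduced by the column permutation. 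Everything else is rank-nullity once these structural facts are in place.
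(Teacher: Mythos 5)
Your proof is correct. Note, however, that the paper itself gives no proof of Lemma~\ref{lutil} to compare against: it invokes the statement as classical, with a pointer to \cite[Lemmas~7.7.4 and~7.7.6]{Mishra}. Your argument supplies exactly the missing content, and it is essentially the standard proof of this classical fact. All the steps check out: the cofactor expansion of \eqref{srs} along the polynomial column does produce a B\'ezout pair $(U,V)$ with $\deg U<n-d$, $\deg V<m-d$ and $Uf+Vg=\Sres_d(f,g)$; the repeated-column argument does give $\deg\Sres_d(f,g)\le d$ (for $d<k\le m+n-d-1$ the $x^k$-coefficient column duplicates a scalar column, and for larger $k$ it is zero); the coefficient map $\widehat T$ is square because both $W$ and the span of the monomials $x^d,\dots,x^{m+n-d-1}$ have dimension $m+n-2d$; and the hypothesis that $\Sres_d(f,g)$ has degree exactly $d$ is precisely the statement $\PSres_d(f,g)\ne 0$, so $\det\widehat T=\pm\PSres_d(f,g)\neq 0$ and $\widehat T$ is bijective, making $K=\widehat T^{-1}(\K e_d)$ a line spanned by $(U,V)$. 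The one point you flag as delicate, the identification $\det\widehat T=\pm\,\PSres_d(f,g)$, is indeed only bookkeeping (transposition and reversal of the column order change the determinant by at most a sign), and since the argument only needs nonvanishing of this determinant, the sign is immaterial. One could streamline the write-up slightly by observing that injectivity of $\widehat T$ restricted to the kernel-of-high-coefficients already gives $\dim K\le 1$ without computing the preimage explicitly, but as written the argument is complete and correct.
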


\subsubsection{Proof of Theorem~\ref{thm:2} when $\chara(\K)=0$.}
\label{ssec:char0}
In this case $\Sres_d((x-\alpha)^m,(x-\beta)^n) $ has degree exactly $d$ by
Identity~\eqref{sd} since $\alpha\ne \beta$. We will then show that $h=
P_{d}^{(-n,-m)}\left(\dfrac{2x-\alpha-\beta}{\beta-\alpha}\right)$ satisfies
the conditions of Lemma~\ref{lutil} applied to $f=(x-\alpha)^m$ and $g=(x-\beta)^n$.\\
One can check {(or refer to \cite[Theorem 4.23.1]{Szeg75} to verify)} that the
polynomials $$P_{d}^{(-n,-m)}(z), \ (1+z)^m P_{n-d-1}^{(-n,m)}(z)\mbox{ and }
(1-z)^n P_{m-d-1}^{(n,-m)}(z),$$ all solve the {linear} differential equation
\[ (1-z^2)y''(z)+ \big((m+n-2)z-m+n\big)y'(z)+d(d+1-m-n)y(z)=0.\]
Substituting $z=\dfrac{2x-\alpha-\beta}{\beta-\alpha}$ in this differential
equation shows that the polynomials
\begin{align*}&y_1(x):=P_{d}^{(-n,-m)}\left(\dfrac{2x-\alpha-\beta}{\beta-\alpha}\right),\\ &y_2(x):=\left(\dfrac{2}{\beta-\alpha}\right)^m(x-\alpha)^m P_{n-d-1}^{(-n,m)}\left(\dfrac{2x-\alpha-\beta}{\beta-\alpha}\right)\ \mbox{ and } \\
&y_3(x):=\left(\dfrac{2}{\alpha-\beta}\right)^n(x-\beta)^n P_{m-d-1}^{(n,-m)}\left(\dfrac{2x-\alpha-\beta}{\beta-\alpha}\right),
\end{align*}
all solve the {linear} differential equation
\begin{align}\label{eq:diff}
(x-\alpha)(x-\beta)y''(x)+ &\big(\alpha(n-1)+\beta
(m-1)-(m+n-2)x\big)y'(x)\nonumber \\  & \qquad +d(m+n-d-1)y(x)=0.
\end{align}
Since the dimension of the solution space of this {second-order} linear
differential equation is 2, the three polynomials $y_1, y_2, y_3$ must be
linearly dependent {over~$\K$}. Now, it is well-known that {the} Jacobi
polynomials satisfy
 \begin{equation}\label{Jac1}
	P_r^{(k,{\ell})}(1)=\frac{(k+1)_r}{r!}\quad\mbox{and} \quad
	P_r^{(k,{\ell})}(-1)=(-1)^r\frac{({\ell}+1)_r}{r!}.
\end{equation}
This implies that
$y_2$ and $y_3$ are not linearly dependent over~$\K$ since
\begin{equation}\label{eq:y2}y_2(\beta)=2^mP_{n-d-1}^{(-n,m)}(1)=(-1)^{n-d-1}2^m\binom{n-1}{d} \ne 0 \ \mbox{ and } \  y_2(\alpha)=0,\end{equation}
while
\begin{equation}\label{eq:y3}y_3(\beta)=0  \ \mbox{ and } \  y_3(\alpha)=2^nP_{m-d-1}^{(n,-m)}(-1)= {2^n\binom{m-1}{d}} \ne 0.\end{equation}

Thus, there exist $A,B\in {\K}$ such that $y_1(x)=A \,y_2(x) + B \,y_3(x)$,
that is,
\begin{align}\label{bezout}
P_{d}^{(-n,-m)}\left(\frac{2x-\alpha-\beta}{\beta-\alpha}\right)=&A \left(\frac{2}{\beta-\alpha}\right)^m P_{n-d-1}^{(-n,m)}\left(\frac{2x-\alpha-\beta}{\beta-\alpha}\right) (x-\alpha)^m \\ & \nonumber \ +    B  \left(\frac{2}{\alpha-\beta}\right)^nP_{m-d-1}^{(n,-m)}\left(\frac{2x-\alpha-\beta}{\beta-\alpha}\right)(x-\beta)^n .
\end{align}
In addition
$P_{d}^{(-n,-m)}\left(\dfrac{2x-\alpha-\beta}{\beta-\alpha}\right)\ne 0$,
since
\begin{equation}\label{eq:spec} P_{d}^{(-n,-m)}( 1)= (-1)^d\binom{n-1}{d} \quad\mbox{and} \quad P_{d}^{(-n,-m)}( -1)= \binom{m-1}{d}.
\end{equation}
Moreover,
$\deg P_{d}^{(-n,-m)}\left(\frac{2x-\alpha-\beta}{\beta-\alpha}\right)\le d$, \
$\deg P_{n-d-1}^{(-n,m)}\left(\frac{2x-\alpha-\beta}{\beta-\alpha}\right) < n-d$    and  \\ $\deg P_{m-d-1}^{(n,-m)}\left(\frac{2x-\alpha-\beta}{\beta-\alpha}\right) < m-d$.
Therefore Lemma \ref{lutil} implies that there exists $\lambda \in \K$ such
that
\begin{equation}\label{Pd}
	P_{d}^{(-n,-m)}\left(\dfrac{2x-\alpha-\beta}{\beta-\alpha}\right)=\lambda \cdot \Sres_d((x-\alpha)^m, (x-\beta)^n).
\end{equation}
Thus, the left-hand side and right-hand side of this equality have the same
coefficient of $x^d$, which implies that $\lambda=p_d/s_d$. We now determine
$p_d$.

By Identity~\eqref{integerPd},
\begin{align}\label{p=lead}
p_d &= \frac{1}{(\alpha-\beta)^d}\sum_{j=0}^d
\binom{n-d+j-1}{j}\binom{m-j-1}{d-j}\nonumber  \\ &
 =\frac{1}{(\alpha-\beta)^d}\binom{m+n-d-1}{d},
\end{align}
where the second equation can be checked by thinking of a $d$-combination with
repetition from a set of size $m+n-2d$, written as a disjoint union of a
subset with $n-d$ elements and its complement with $m-d$ elements, computed by
adding, for $0\leq j \leq d$, the $j$-combination with repetition from the
first subset of size $n-d$ combined with the $(d-j)$-combination with
repetition from the second subset of size $m-d$.\\ Passing
$\lambda^{-1}=s_d/p_d$ to the left-hand side in Identity~\eqref{Pd} proves
Theorem~\ref{thm:2} when $\chara(\K)=0$.\hfill\mbox{$\Box$}

\subsubsection{Proof that $\Sres_d((x-\alpha)^m,(x-\beta)^n)$ belongs to $ \Z[x-\alpha,x-\beta]$.}\label{ssec:Z}

This result is already proved in \cite{BDKSV17}, but we give here an
independent proof because in Section~5.1 we will show the result in
\cite{BDKSV17} (see Theorem~\ref{th:old} below) and our Theorem~\ref{thm:2}
are equivalent.

\begin{lemma}\label{lem:intsubres} Set $d,m, n\in \N$ with  $0\le d<\min\{m,n\}$, and let $(x-\alpha)^m, (x-\beta)^n \in \Z[\alpha,\beta,x]$. Then
$$ \Sres_d((x-\alpha)^m, (x-\beta)^n)\in \Z[x-\alpha,x-\beta].$$
\end{lemma}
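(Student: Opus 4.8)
The plan is to combine a \emph{characteristic-free} integrality statement, which is immediate from the determinantal definition, with the \emph{structural} description already obtained over fields of characteristic $0$ in Subsection~\ref{ssec:char0}; intersecting the two pieces of information pins down $\Sres_d((x-\alpha)^m,(x-\beta)^n)$ inside $\Z[x-\alpha,x-\beta]$. The only substantive input is thus the characteristic-$0$ identity $\Sres_d=Q_d^{(-n,-m)}$ of Theorem~\ref{thm:2}; everything else is bookkeeping in a polynomial ring.

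First I would record that $\Sres_d((x-\alpha)^m,(x-\beta)^n)\in\Z[\alpha,\beta,x]$. Indeed, the coefficients of $(x-\alpha)^m$ are $\binom{m}{i}(-\alpha)^{m-i}\in\Z[\alpha]$ and, symmetrically, those of $(x-\beta)^n$ lie in $\Z[\beta]$, so every entry of the matrix in~\eqref{srs} belongs to $\Z[\alpha,\beta,x]$; since the determinant of a matrix over a commutative ring stays in that ring, so does $\Sres_d$.

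Next I would pass to the coordinates $u:=x-\alpha$ and $v:=x-\beta$. Since $\alpha=x-u$ and $\beta=x-v$, the map $(\alpha,\beta,x)\mapsto(u,v,x)$ is an invertible $\Z$-linear change of variables, so $\Z[\alpha,\beta,x]=\Z[u,v,x]$, $\Q[\alpha,\beta,x]=\Q[u,v,x]$, and $\Z[x-\alpha,x-\beta]=\Z[u,v]$ is precisely the subring of elements not involving $x$. Working over the characteristic-$0$ field $\Q(\alpha,\beta)$, where $\alpha\neq\beta$, the already-established characteristic-$0$ case of Theorem~\ref{thm:2} gives $\Sres_d=Q_d^{(-n,-m)}(\alpha,\beta,x)$; by~\eqref{integerPd}, \eqref{pd} and~\eqref{sd} the denominator $(\alpha-\beta)^{-d}$ carried by $P_d^{(-n,-m)}$ is absorbed by the factor $(\alpha-\beta)^{(m-d)(n-d)+d}$ coming from $s_d/p_d$, and since $(m-d)(n-d)\ge0$ this leaves $Q_d^{(-n,-m)}\in\Q[x-\alpha,x-\beta]=\Q[u,v]$. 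In the coordinates $u,v,x$ this says that, over $\Q$, the subresultant involves no power of $x$.

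Finally I would compare the two expansions. The element $\Sres_d\in\Z[u,v,x]$ produced in the first step is, after base change along $\Z\hookrightarrow\Q(\alpha,\beta)$, the same polynomial as the one lying in $\Q[u,v]$; since the monomials $u^av^bx^e$ are $\Q$-linearly independent, the integer coefficients of the first expansion agree with the rational coefficients of the second. The latter vanish for every $e>0$, so the corresponding integers vanish too, and $\Sres_d$ involves no $x$, while its surviving $u,v$-coefficients are integers by the first step. Hence $\Sres_d\in\Z[u,v]=\Z[x-\alpha,x-\beta]$, which is exactly the intersection $\Q[x-\alpha,x-\beta]\cap\Z[\alpha,\beta,x]$. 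I expect the one point that really needs care to be the cancellation of the powers of $\alpha-\beta$, ensuring that $Q_d^{(-n,-m)}$ is a genuine polynomial in $x-\alpha$ and $x-\beta$ rather than merely a rational function of them; once that is settled, the remainder of the argument is purely formal.
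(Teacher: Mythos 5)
Your proof is correct, and while it rests on the same two pillars as the paper's proof --- the determinantal definition gives $\Sres_d((x-\alpha)^m,(x-\beta)^n)\in\Z[\alpha,\beta,x]$, and the already-proved characteristic-zero case of Theorem~\ref{thm:2}, applied over $\K=\Q(\alpha,\beta)$, gives membership in $\Q[x-\alpha,x-\beta]$ --- your concluding step is genuinely different from the paper's. The paper specializes $(\alpha,\beta):=(0,-1)$, so that Theorem~\ref{thm:2} yields $\Sres_d(x^m,(x+1)^n)=\sum_j c_j\,x^j(x+1)^{d-j}$ with $c_j\in\Q$, while the determinant yields $\sum_k a_k x^k$ with $a_k\in\Z$; it then observes that the transition matrix between the basis $\{x^j(x+1)^{d-j}\}_j$ and the monomial basis is lower-triangular with unit diagonal and integer entries, hence invertible over $\Z$, forcing $c_j\in\Z$. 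You instead keep $\alpha,\beta$ as indeterminates and exploit the fact that $u=x-\alpha$, $v=x-\beta$, $x$ are algebraically independent generators of the same ring, so that the identity $\Q[u,v]\cap\Z[u,v,x]=\Z[u,v]$ follows from uniqueness of the monomial expansion in $\Q[u,v,x]$; this dispenses with both the specialization (and the implicit compatibility of subresultants with specialization at monic inputs) and the matrix inversion, making the final step purely formal. What the paper's route buys in exchange is slightly finer information: it identifies the integrality of the specific coefficients $c_j$ in the Bernstein-type basis $(x-\alpha)^j(x-\beta)^{d-j}$, which is exactly the form of the statement in Theorem~\ref{th:old} and is used later when matching the two descriptions. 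Your one flagged point of care --- that the factor $(\alpha-\beta)^{(m-d)(n-d)+d}$ from $s_d/p_d$ absorbs the denominator $(\alpha-\beta)^{-d}$ in \eqref{integerPd}, so $Q_d^{(-n,-m)}\in\Q[x-\alpha,x-\beta]$ --- is indeed the right thing to check, and is the same observation the paper records just before the statement of Theorem~\ref{thm:2}.
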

\begin{proof}
It is well-known from the matrix formulation of the subresultant that
$\Sres_d((x-\alpha)^m, (x-\beta)^n)\in \Z[\alpha,\beta,x]$.
Theorem~\ref{thm:2} gives us a way of writing $$ \Sres_d((x-\alpha)^m,
(x-\beta)^n)=(\alpha-\beta)^{(m-d)(n-d)} \sum_{j=0}^d
c_j(x-\alpha)^j(x-\beta)^{d-j} $$ where $c_j\in \Q$.

In particular, for $\alpha=0$ and $\beta=-1$, one has
on the one hand $$\Sres_d(x^m,(x+1)^n)=\sum_{j=0}^d c_j x^j (x+1)^{d-j},$$ with $c_j\in \Q$
while on the other hand $\Sres_d(x^m,(x+1)^n)=\sum_{k=0}^d a_k x^k$ with $a_k\in \Z$, $0\le k\le d$.
This means that
$$\sum_{j=0}^d c_j x^j (x+1)^{d-j}=\sum_{k=0}^d a_k x^k,$$ with $a_k\in \Z$ for $0\le k\le d$.
Comparing coefficients, we deduce that
$$a_k=  \sum_{j=0}^k \binom{d}{k-j} c_{j}, \quad 0\le k\le d,$$ i.e., that
$$\left(\begin{array}{c}a_0\\ \vdots \\a_d\end{array}\right)= \left(\begin{array}{cccc} 1 & & & \\
\binom{d}{1} & 1& & \\ \vdots &\vdots &\ddots & \\ \binom{d}{d} & \binom{d}{d-1} & \dots & 1\end{array}\right) \left(\begin{array}{c}c_0\\ \vdots \\c_d\end{array}\right).$$
We conclude that $c_j\in \Z$ for all $0\le j\le d$, since the $a_k$'s are
integer numbers and the transition matrix is an invertible integer matrix.
\end{proof}

\subsubsection{Concluding the proof of {Theorem~\ref{thm:2}}.}
We assume that  $\alpha$ and $\beta$ are distinct indeterminates over $\Q$. The theorem holds over the field $ \mathbb{Q}(\alpha,
\beta)$,
with both sides of equality~\eqref{eq:Sres=Jac} belonging to
$\Z[x-\alpha,x-\beta]$.
To prove  the
theorem for an arbitrary field~$\K$, and for distinct values~$\tilde{\alpha}$ and~$\tilde{\beta}$ in $\K$,
we apply a classical specialization
argument, using the ring homomorphism $\Z[x-\alpha, x- \beta] \to \K[x]$ which maps
$1_\Z\mapsto 1_\K, \alpha\mapsto \tilde{\alpha}, \beta\mapsto \tilde{\beta}$.

\subsection{Beyond Theorem~\ref{thm:2}}\label{ssec:beyond-thm:2}

An advantage of our proof of Theorem~\ref{thm:2} is that it also shows that
the unique polynomials $F_d$ and $G_d$ in $\K[x]$ of degrees respectively less
than $n-d$ and $m-d$ that are the coefficients of the {\em B\'ezout identity}
\begin{equation}\label{bezout2}\Sres_d((x-\alpha)^m,(x-\beta)^n)=F_d  \cdot (x-\alpha)^m+G_d
\cdot
(x-\beta)^n,
\end{equation}
are also (scalar multiples of) Jacobi polynomials, up to the same affine
change of variables. More precisely, we have:

\begin{corollary} \label{coro:cofact}
Let $\K$ be a field and $\alpha,\beta \in \K$ with $\alpha\ne \beta$. Set $d,m, n\in \N$ with  $0\le d<\min\{m,n\}$.
Then, the polynomials $F_d$ and $G_d$ defined in \eqref{bezout2} satisfy
\begin{align*} F_d &=\frac{(-1)^{n-1}s_d\, P_{n-d-1}^{(-n,m)}\left(\frac{(x-\alpha)+(x-\beta)}{\beta-\alpha}\right)}{(\beta-\alpha)^m \,p_d}, \\
 G_d &=\frac{(-1)^ns_d\, P_{m-d-1}^{(n,-m)}\left(\frac{(x-\alpha)+(x-\beta)}{\beta-\alpha}\right)}{(\beta-\alpha)^n\,p_d}.
\end{align*}
\end{corollary}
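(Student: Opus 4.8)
The plan is to observe that this corollary is essentially already contained in the proof of Theorem~\ref{thm:2}: the linear relation~\eqref{bezout} expresses the transformed Jacobi polynomial $y_1$ as an explicit combination $A\,y_2+B\,y_3$, where $y_2$ is a scalar multiple of $P_{n-d-1}^{(-n,m)}(\cdots)\,(x-\alpha)^m$ and $y_3$ is a scalar multiple of $P_{m-d-1}^{(n,-m)}(\cdots)\,(x-\beta)^n$. Combining this with the identification $y_1=\lambda\cdot\Sres_d((x-\alpha)^m,(x-\beta)^n)$ from~\eqref{Pd}, where $\lambda=p_d/s_d$, immediately exhibits $\Sres_d$ as a B\'ezout combination of $(x-\alpha)^m$ and $(x-\beta)^n$ whose cofactors are scalar multiples of the asserted Jacobi polynomials. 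Since these satisfy $\deg P_{n-d-1}^{(-n,m)}(\cdots)<n-d$ and $\deg P_{m-d-1}^{(n,-m)}(\cdots)<m-d$, they meet the degree constraints of~\eqref{bezout2}, so the only remaining work is to pin down the two scalars $A$ and $B$.

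First I would determine $A$ and $B$ by specializing~\eqref{bezout} at $x=\beta$ and $x=\alpha$. At $x=\beta$ the argument of every Jacobi polynomial becomes $1$ and the term carrying $(x-\beta)^n$ vanishes, so \eqref{eq:spec} and \eqref{eq:y2} give
\[
(-1)^d\binom{n-1}{d}=A\,2^m\,P_{n-d-1}^{(-n,m)}(1)=A\,(-1)^{n-d-1}2^m\binom{n-1}{d},
\]
whence $A=(-1)^{n-1}/2^m$. Symmetrically, at $x=\alpha$ the argument becomes $-1$ and the term carrying $(x-\alpha)^m$ vanishes, so \eqref{eq:spec} and \eqref{eq:y3} give $\binom{m-1}{d}=B\,2^n\binom{m-1}{d}$, whence $B=1/2^n$. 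The Jacobi values used here are exactly those recorded, via \eqref{Jac1}, in \eqref{eq:y2} and \eqref{eq:y3}, so no new computation is needed.

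Substituting these scalars and using $\lambda^{-1}=s_d/p_d$, the uniqueness of the B\'ezout cofactors in \eqref{bezout2} forces
\[
F_d=\frac{s_d}{p_d}\,A\left(\frac{2}{\beta-\alpha}\right)^m P_{n-d-1}^{(-n,m)}\!\left(\frac{(x-\alpha)+(x-\beta)}{\beta-\alpha}\right),
\]
and likewise for $G_d$ with $B$ and $P_{m-d-1}^{(n,-m)}$. The factors $2^m$ and $2^n$ cancel; the $F_d$ expression is then already in the stated form, while rewriting $(\alpha-\beta)^n=(-1)^n(\beta-\alpha)^n$ turns the $G_d$ expression into the stated form carrying the sign $(-1)^n$. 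This yields both displayed formulas in characteristic $0$.

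The hard part is passing to an arbitrary field, which I would handle exactly as in the concluding step of the proof of Theorem~\ref{thm:2}. Existence of $F_d,G_d$ over any $\K$ is automatic by expanding the determinant~\eqref{srs} along its last column, and their uniqueness under the bounds $\deg F_d<n-d$, $\deg G_d<m-d$ holds because $\alpha\ne\beta$ makes $(x-\alpha)^m$ and $(x-\beta)^n$ coprime, so the map $(F,G)\mapsto F(x-\alpha)^m+G(x-\beta)^n$ is injective on the relevant space. Working over $\Q(\alpha,\beta)$ with $\alpha,\beta$ indeterminates, the identity $\Sres_d=F_d(x-\alpha)^m+G_d(x-\beta)^n$ together with the explicit cofactor formulas holds; the subtle point, as for Theorem~\ref{thm:2}, is that the apparent denominators $p_d$ and powers of $\beta-\alpha$ cancel against the integral numerators, so that after clearing them the identity lives in $\Z[x-\alpha,x-\beta]$ localized at $\beta-\alpha$ (a unit upon specialization since $\tilde\alpha\ne\tilde\beta$). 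Applying the map $1_\Z\to1_\K$, $\alpha\mapsto\tilde\alpha$, $\beta\mapsto\tilde\beta$ and invoking uniqueness over $\K$ then completes the proof.
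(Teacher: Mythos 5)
Your proposal is correct and follows essentially the same route as the paper's proof: working in characteristic $0$, you combine \eqref{bezout} with \eqref{Pd}, determine $A=(-1)^{n-1}/2^m$ and $B=1/2^n$ by evaluating at $x=\beta$ and $x=\alpha$ (exactly the evaluations at $\pm 1$ the paper performs via \eqref{eq:spec}, \eqref{eq:y2} and \eqref{eq:y3}), and then pass to arbitrary characteristic by the same integrality-plus-specialization argument. Your explicit appeal to coprimality of $(x-\alpha)^m$ and $(x-\beta)^n$ for uniqueness of the cofactors is only a minor elaboration of what the paper leaves implicit.
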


\begin{proof}
As in the proof of Theorem~\ref{thm:2} we first assume that $\K$ is a field of characteristic 0. By this theorem,  Identities \eqref{bezout2} and \eqref{bezout}, one has
\begin{align*}p_d\,F_d&= s_d\,A \left(\frac{2}{\beta-\alpha}\right)^m P_{n-d-1}^{(-n,m)}\left(\frac{2x-\alpha-\beta}{\beta-\alpha}\right), \\ p_d\,G_d&= s_d\, B  \left(\frac{2}{\alpha-\beta}\right)^nP_{m-d-1}^{(n,-m)}\left(\frac{2x-\alpha-\beta}{\beta-\alpha}\right).
\end{align*}
We now determine the values of $A$ and $B$.
By Identities~\eqref{eq:y2},~\eqref{eq:y3},~\eqref{bezout} and~\eqref{eq:spec}, we get
\begin{align*}&\binom{m-1}{d}= P_d^{(-n,-m)}(-1)=B \left(\frac{2}{\alpha-\beta} \right)^n P_{m-d-1}^{(n,-m)}(-1)(\alpha-\beta)^n\\ & \qquad \qquad \qquad  = 2^n {\binom{m-1}{d}B},\\
&(-1)^d\binom{n-1}{d}= P_d^{(-n,-m)}(1)=A \left(\frac{2}{\beta-\alpha} \right)^m P_{n-d-1}^{(-n,m)}(1)(\beta-\alpha)^m\\ & \qquad \qquad \qquad  =(-1)^{n-d-1}2^m {\binom{n-1}{d}A}.
\end{align*}
Therefore $A=\dfrac{(-1)^{n-1}}{2^m} $ and $B=\dfrac{1}{2^n}$. This proves the
statement when $\chara(\K)=0$. Finally, both sides in the equalities of
Corollary~\ref{coro:cofact} belong to
$\frac{1}{(\alpha-\beta)^{m+n-d-1}}\Z[\alpha,\beta,x]$ and so they specialize
well to a field of any characteristic via the map $1\mapsto 1_\K$.
\end{proof}

\subsection{{\it Proof of {Theorem~\ref{thm:3}}.}}
We now prove Theorem~\ref{thm:3}, which gives a recurrence
satisfied by the coefficients (in the monomial basis) of
$\Sres_d((x-\alpha)^m,(x-\beta)^n)$. The recurrence is inherited from
the differential equation~\eqref{eq:diff} satisfied by $P_d^{(-n,-m)}\left( \dfrac{(x-\alpha)+(x-\beta)}{\beta-\alpha}\right)$ in characteristic 0.

By Theorem~\ref{thm:2},
\begin{align}\label{eq:SP} \Sres_d((x-\alpha)^m, (x-\beta)^n)&=Q_d^{(-n,-m)}\left( \alpha, \beta, x\right) \nonumber \\
&= \frac{s_d}{p_d}\cdot P_d^{(-n,-m)}\left( \dfrac{(x-\alpha)+(x-\beta)}{\beta-\alpha}\right),\end{align}
 where
$ P_d^{(-n,-m)}\left( \dfrac{(x-\alpha)+(x-\beta)}{\beta-\alpha}\right)$ is the integer Jacobi polynomial described in Identity~\eqref{integerPd}, and \begin{align}\nonumber\frac{s_d}{p_d}&= (\alpha-\beta)^{(m-d)(n-d)+d}\frac{\prod_{i=1}^{d}r(i)}{\binom{m+n-d-1}{d}}\\
\label{s_d/p_d}& =  (\alpha-\beta)^{(m-d)(n-d)+d}\prod_{i=1}^d \dfrac{i!(m+n-d-i-1)!}{(m-i)!(n-i)!}.\end{align}
Therefore, the differential equation~\eqref{eq:diff} satisfied by the Jacobi polynomial is also satisfied by $s(x):=\Sres_d((x-\alpha)^m,(x-\beta)^n)$. We now show that this fact implies the statement.  We start with $$s(x)=\sum_{k=0}^d s_kx^k, \quad
s'(x)=\sum_{k=1}^{d}ks_k x^{k-1} \,
\mbox{ and } \, s''(x)= \sum_{k=2}^{d} k(k-1)s_{k}x^{k-2}.
$$
We then have  \begin{align*}
&(x-\alpha)(x-\beta)s''(x)= \sum_{k=2}^{d}k(k-1)s_kx^{k}-(\alpha+\beta)\sum_{k=2}^{d}k(k-1)s_kx^{k-1}\\ &\qquad \qquad \qquad \qquad +\alpha\beta\sum_{k=2}^{d}k(k-1)s_kx^{k-2}\\
&\qquad = \sum_{k=0}^{d}k(k-1)s_kx^{k}-(\alpha+\beta)\sum_{k=0}^{d-1}(k+1)ks_{k+1}x^{k}\\ &\qquad \qquad \qquad +\alpha\beta\sum_{k=0}^{d-2}(k+2)(k+1)s_{k+2}x^{k},
\end{align*}
\begin{align*}
&\left(\alpha(n-1)+\beta (m-1)-(m+n-2)x\right)s'(x)= -(m+n-2)\sum_{k=1}^{d}ks_kx^{k}\\ &\qquad \qquad \qquad +(\alpha(n-1)+\beta (m-1))\sum_{k=1}^{d}ks_kx^{k-1}\\
&= -(m+n-2)\sum_{k=0}^{d}ks_kx^{k}+(\alpha(n-1)+\beta (m-1))\sum_{k=0}^{d-1}(k+1)s_{k+1}x^{k},
\end{align*}
and
$$ d(m+n-d-1)s(x)=d(m+n-d-1)\sum_{k=0}^d s_kx^k.$$
Now we compare  the degree-$k$ coefficient in \eqref{eq:diff} for $k=0,\dots, d-1$:
\begin{align*}&\big(k(k-1)-(m+n-2)k+d(m+n-d-1\big)s_k +\big(- (\alpha+\beta)(k+1)k\\
& \quad + (\alpha(n-1)+\beta(m-1))(k+1)\big)s_{k+1}+\alpha\beta(k+2)(k+1)s_{k+2} =0.
\end{align*}
Therefore,
$$s_k =\dfrac{ -(k+1)\Big(\big((n-k-1)\alpha+(m-k-1)\beta\big)s_{k+1}+ (k+2)\alpha\beta s_{k+2}\Big) }{(d-k)(m+n-d-k-1)}.
$$
This proves the recurrence when $\chara(\K)=0$.
It is clear that the same recurrence also holds for fields $\K$ of
characteristic  $\ge m+n-d$ via the map $1_\Z \to 1_\K$ since in all the steps we are dividing only by natural numbers less than $m+n-d$.
{\hfill\mbox{$\Box$}}

\section{Proof of Theorem~\ref{thm:1}.} \label{sec:thm1}

\subsection{Proof of Theorem~\ref{thm:1}~\emph{(a)}.} \label{sec:thm1:case1}
We start with the following simple observation.

\medskip\begin{lemma} \label{lem:binomial}
Let $\K$ be a field, let $k,\ell \geq 0$ be integers and assume
$\chara(\K)=0$  or $ \chara(\K) > \min\{k,\ell\}$. Then  the (image
in~$\K$ of the) binomial coefficient $\binom{k+\ell}{k}$ can be computed in
$O(\min\{k,\ell\})$ arithmetic operations in~$\K$.
\end{lemma}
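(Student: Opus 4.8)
The plan is to reduce to the standard multiplicative formula for binomial coefficients and to verify that, under the stated hypothesis on $\chara(\K)$, every division occurring in its evaluation is a division by a nonzero (hence invertible) element of~$\K$. By the symmetry $\binom{k+\ell}{k}=\binom{k+\ell}{\ell}$ I may assume without loss of generality that $k=\min\{k,\ell\}\le \ell$, so that the goal is to carry out $O(k)$ operations in~$\K$. I would then use the telescoping identity
\[ \binom{k+\ell}{k}=\prod_{i=1}^{k}\frac{\ell+i}{i}, \]
evaluating it from left to right while maintaining a running product $B_j=\prod_{i=1}^{j}\frac{\ell+i}{i}$ together with the $\K$-images of the integers $i$ and $\ell+i$, each of which is updated by adding $1_\K$ at every step.

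The key point is the legitimacy of the divisions. At step $i$ one divides by the image in~$\K$ of the integer~$i$, and $1\le i\le k=\min\{k,\ell\}$; the hypothesis $\chara(\K)=0$ or $\chara(\K)>\min\{k,\ell\}$ guarantees precisely that this image is nonzero, hence invertible, so no division by zero can occur and the procedure is well defined. Moreover each partial product is itself an integer binomial coefficient, namely $B_j=\binom{\ell+j}{j}\in\Z$, so the computation stays consistent with the integer values throughout; reducing the rational identity modulo the characteristic is therefore licit, and the final output $B_k$ is exactly the image in~$\K$ of $\binom{k+\ell}{k}$.

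For the complexity count, the loop runs for $j=1,\dots,k$, and each iteration uses only a bounded number of operations in~$\K$: one addition to update the image of $i$, one to update the image of $\ell+i$, and one multiplication and one division to update $B_j$. This yields $O(\min\{k,\ell\})$ arithmetic operations, as claimed. The one quantity not produced incrementally by the loop is the starting image of $\ell$ in~$\K$, which is obtained once by binary powering in $O(\log\ell)$ operations; this is the source of the logarithmic terms that surface when the lemma is later invoked in Theorem~\ref{thm:1}. I expect the only genuine subtlety to be the bookkeeping of the characteristic condition: the bound on $\chara(\K)$ must be matched exactly to the largest denominator used, which is why expanding along the smaller of $k$ and $\ell$ is what makes the operation count and the invertibility of all denominators hold simultaneously.
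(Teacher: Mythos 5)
Your proposal is correct and is essentially the paper's own argument: the paper's one-line proof says to use the more economical of the two writings $(k+\ell)\cdots(k+1)/\ell!$ and $(\ell+k)\cdots(\ell+1)/k!$, which is exactly your product $\prod_{i=1}^{k}\frac{\ell+i}{i}$ with $k=\min\{k,\ell\}$, just evaluated with the divisions interleaved. Your additional bookkeeping — checking that every division is by an integer at most $\min\{k,\ell\}$, hence invertible under the hypothesis on $\chara(\K)$, and noting the $O(\log\ell)$ cost of first producing the image of $\ell$ in $\K$ by double-and-add — is a more careful rendering of the same idea (the paper silently ignores that bootstrap cost, which is harmless since every invocation of the lemma in Theorem~\ref{thm:1} already carries a $+\log(mn)$ term).
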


\begin{proof}
It is enough to use for $\binom{k+\ell}{k}$ the most economic of the
equivalent writings $(k+\ell)\cdots (k+1)/\ell!$ and $(\ell+k)\cdots
(\ell+1)/k!$.
\end{proof}	

The proof that one can compute all coefficients of the $d$-th subresultant of
$(x-\alpha)^m$ and $(x-\beta)^n$ in $O(\min\{m,n\}+\log(mn))$ operations
in~$\K$ when $\chara(\K)$ is either zero or larger than $m+n-d$ will be
derived from the recurrence~\eqref{eq:rec} described in Theorem~\ref{thm:3}.
The proof is algorithmic and proceeds in several steps.

We start with
$s_d=(\alpha-\beta)^{(m-d)(n-d)}\prod_{i=1}^{d}r(i)$, with $r(i)$ defined
in~\eqref{sd}, and observe that for the mentioned characteristics, $s_d\ne 0$ since $\alpha\ne \beta$.
\begin{itemize}
\item The term $(\alpha-\beta)^{(m-d)(n-d)}$ can be computed in
$O(\log(mn))$ arithmetic operations, by using binary powering. \item The
element $r(d)=(d-1)!\binom{m+n-2d}{m-d}$ can be computed in $O(\min\{m,n\})$
arithmetic operations by applying Lemma~\ref{lem:binomial}, and using that
$d<\min\{m,n\}$.
\item Thanks to the recurrence
$$r(i)=\frac{(m+n-d-i)}{i(m-i)(n-i)}r(i+1),$$ all $r(d-1),\dots,r(1)$ can be
deduced from $r(d)$ in $O(d)$ additional operations; then, computing
$r(1)\cdots r(d)$ also takes $O(d)$ operations.

Note that during the unrolling of the recurrence, the only divisions that
occur are by positive integers less than $\max\{m,n\}$,
legitimate in  $\K$ by the assumption on its characteristic.
\end{itemize}
This shows that $s_d$ can be computed using $O(\min\{m,n\}+\log(mn))$ arithmetic operations in~$\K$.
\begin{itemize}
\item
Starting from $s_{d+1}=0$ and $s_d$, we use the recurrence~\eqref{eq:rec}
 to compute $s_{d-1}, s_{d-2}, \ldots, s_0$ in $O(d)$ operations,
by adding $O(1)$ operations in~$\K$ for each of these $d$ terms.

Note that in this step only divisions by integers less than $m+n-d-1$ may occur, and all these elements are invertible in $\K$, by assumption.
\end{itemize}
In conclusion, all the coefficients $s_0, \ldots, s_d$ of
$\Sres_d((x-\alpha)^m,(x-\beta)^n)$ can be computed in
$O(\min\{m,n\}+\log(mn))$ operations in~$\K$, when $\chara(\K)=0$ or
$\chara(\K)\ge m+n-d$. \hfill\mbox{$\Box$}

\subsection{Proof of Theorem~\ref{thm:1}~\emph{(b)}. \label{sec:thm1:case2}}

We apply again the recurrence given by Theorem~\ref{thm:3}, in the
characteristic~0 case, to show that when $\chara(\K)=m+n-d-1$, the polynomial
subresultant $\Sres_d((x-\alpha)^m,(x-\beta)^n)) $ is actually a (non-zero)
constant in $\K$.

\begin{lemma}
Set $d,m, n\in \N$ with  $1\le d<\min\{m,n\}$ and let
 \begin{equation*}
 \Sres_d((x-\alpha)^m,(x-\beta)^n)=\sum_{k=0}^d s_k \,x^k \quad \in \ \Z[\alpha,\beta][x].
 \end{equation*}
Assume that   $m+n-d-1$ equals a prime number $p$.
Then $p\mid s_k$ in $\Z[\alpha,\beta]$ for $1\le k\le d$.
\end{lemma}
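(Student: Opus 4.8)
The plan is to run a downward induction on $k$, from $k=d$ down to $k=1$, using the second-order recurrence of Theorem~\ref{thm:3}. Throughout I would regard $\alpha,\beta$ as indeterminates and work in $\Z[\alpha,\beta]$, recalling that every coefficient $s_k$ lies in this ring. The recurrence~\eqref{eq:rec} is valid over the characteristic-zero field $\Q(\alpha,\beta)$; clearing its denominator $(d-k)(m+n-d-k-1)$ produces the integral identity
\begin{align}\label{eq:cleared}
(d-k)(m+n-d-k-1)\,s_k &= -(k+1)\big((n-k-1)\alpha+(m-k-1)\beta\big)s_{k+1}\nonumber\\
&\quad -(k+1)(k+2)\alpha\beta\,s_{k+2}
\end{align}
in $\Z[\alpha,\beta]$, for $k=d-1,\dots,0$ and with $s_{d+1}=0$. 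Since $p$ is prime, the reduction $\Z[\alpha,\beta]\to\mathbb{F}_p[\alpha,\beta]$ lands in an integral domain, which is what will make the inductive cancellation legitimate.

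For the base case $k=d$ I would use the closed form $s_d=(\alpha-\beta)^{(m-d)(n-d)}\prod_{i=1}^d r(i)$ from~\eqref{sd}, where the integer $\prod_{i=1}^d r(i)$ equals $\tfrac{\prod_{i=1}^d(i-1)!\,\prod_{i=1}^d(m+n-d-i)!}{\prod_{i=1}^d(m-i)!\,\prod_{i=1}^d(n-i)!}$, and compute its $p$-adic valuation with $p=m+n-d-1$. Among all the factorials occurring, the only argument reaching $p$ is $m+n-d-1=p$, appearing at $i=1$ in the numerator, so the only factorial of positive valuation is $p!$, of valuation exactly $1$; every other argument is a positive integer strictly below $p$ (for $m-i\le m-1<p$ and $n-i\le n-1<p$ one uses $n>d$ and $m>d$). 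Hence $v_p(\prod_{i=1}^d r(i))=1$, and therefore $p\mid s_d$.

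For the inductive step, assume $p\mid s_{k+1}$ and $p\mid s_{k+2}$ for some $k$ with $1\le k\le d-1$. The right-hand side of~\eqref{eq:cleared} is then divisible by $p$, so $p$ divides $(d-k)(m+n-d-k-1)\,s_k=(d-k)(p-k)\,s_k$. The crucial point is that both $d-k$ and $p-k$ are positive integers strictly less than $p$ (for $d-k$ because $d<p$, for $p-k$ because $1\le k$), hence coprime to the prime $p$; their product is a unit modulo $p$, and since $\mathbb{F}_p[\alpha,\beta]$ is a domain we may cancel it to conclude $p\mid s_k$. Downward induction then yields $p\mid s_k$ for all $1\le k\le d$.

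The main obstacle is the base case: establishing $p\mid s_d$ hinges on pinning down the exact $p$-adic valuation of $\prod_{i=1}^d r(i)$, rather than a one-line divisibility. It is also worth noting \emph{why} the argument stops at $k=1$: for $k=0$ the leading coefficient of~\eqref{eq:cleared} is $d\,(m+n-d-1)=dp$, which is itself divisible by $p$ and can no longer be cancelled in $\mathbb{F}_p[\alpha,\beta]$. This is exactly consistent with the statement excluding the constant term $s_0$.
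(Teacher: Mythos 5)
Your proof is correct and follows essentially the same route as the paper's: the base case $p\mid s_d$ via the $p$-adic valuation of $\prod_{i=1}^{d}r(i)$ in Identity~\eqref{sd}, followed by downward induction on $k$ using the recurrence of Theorem~\ref{thm:3}, whose denominators $(d-k)(m+n-d-k-1)$ are coprime to $p$ for $1\le k\le d-1$. Your write-up is in fact somewhat more careful than the paper's terse argument (exact valuation count, explicit reduction to $\mathbb{F}_p[\alpha,\beta]$, and the observation of why $k=0$ must be excluded), but the underlying ideas are identical.
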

\begin{proof}
By applying Identity~\eqref{sd}, we first show that  $p\mid s_d$: clearly $p$ does not divide the denominator but  $p$ divides
 $(m+n-d-1)!$ which is in the numerator of $r(1)$. Therefore $p\mid \prod_{i=1}^d r(i)$ and $p\mid s_d$ (since $d\ge 1$).
 Observe that for $1\le k\le d-1$, the denominators that appear in the recurrence defining the sequence~$s_k$ in Theorem~\ref{thm:3} range from  $(d-1)(m+n-d-2)$ to $(m+n-2d)$, and thus none of them is divisible by $p=m+n-d-1$.
 Therefore, since $p\mid s_{d+1}$ and $p\mid s_d$, we inductively conclude that $p\mid  s_k$ for $1\le k\le d$.
\end{proof}

Via the map $1_\Z \to 1_\K$, we immediately deduce that  $s_d=\dots = s_1=0 $ in~$\K$, and therefore
$\Sres_d((x-\alpha)^m,(x-\beta)^n)\in \K$. We compute its value by specializing  Identity~\eqref{eq:SP} at $x=\alpha$, and thanks to \eqref{eq:spec} and \eqref{s_d/p_d}. Set  $p:=m+n-d-1=\chara (\K)$, then
 $\Sres_d((x-\alpha)^m,(x-\beta)^n)$ is equal to
\begin{align*}
&(\alpha-\beta)^{(m-d)(n-d)+d}\prod_{i=1}^d \dfrac{i!(m+n-d-i-1)!}{(m-i)!(n-i)!}P_d^{(-n,-m)}(-1)\\&=
(\alpha-\beta)^{(m-d)(n-d)+d}\prod_{i=1}^d \dfrac{i!(p-i)!}{(m-i)!(n-i)!}\binom{m-1}{d}\\& =
(\alpha-\beta)^{(m-d)(n-d)+d}\prod_{i=1}^d \dfrac{(i-1)!(p-i)!}{(m-i-1)!(n+i-d-1)!}
\\& =
(\alpha-\beta)^{(m-d)(n-d)+d}\prod_{i=1}^d \dfrac{\binom{p-1}{m-i-1}}{{\binom{p-1}{i-1}}}.
\end{align*}
It remains to show that the last product is equal to~$(-1)^{md}$ in~$\K$.
This is an immediate consequence of the following elementary lemma.

\begin{lemma}
$\displaystyle{\binom{p-1}{\ell} = (-1)^\ell}$ in $\K$,
for any~$0\leq \ell < p = \chara (\K)$.
\end{lemma}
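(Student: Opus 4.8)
The plan is to prove this elementary congruence by a direct computation in~$\K$, exploiting that the hypothesis $p=\chara(\K)$ collapses each numerator factor to a small negative integer. First I would write the binomial coefficient as the explicit quotient
\[
\binom{p-1}{\ell} = \frac{(p-1)(p-2)\cdots(p-\ell)}{\ell!},
\]
which is an identity in~$\Z$ and hence, after applying the map $1_\Z\to 1_\K$, also an identity in~$\K$. Since $0\le \ell < p=\chara(\K)$, the factorial $\ell!$ is a product of nonzero elements of~$\K$ and is therefore invertible, so the right-hand side is well defined in~$\K$.

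The key step is then to reduce the numerator: for each $1\le j\le \ell$ one has $p-j = -j$ in~$\K$, so the numerator equals $(-1)(-2)\cdots(-\ell) = (-1)^\ell\,\ell!$ in~$\K$. Dividing by the invertible element $\ell!$ yields $\binom{p-1}{\ell} = (-1)^\ell$, as claimed.

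An equally short alternative would proceed by induction on~$\ell$ using Pascal's rule $\binom{p}{\ell} = \binom{p-1}{\ell} + \binom{p-1}{\ell-1}$ together with the classical fact that $p\mid\binom{p}{\ell}$ for $1\le \ell\le p-1$; this gives $\binom{p-1}{\ell} = -\binom{p-1}{\ell-1}$ in~$\K$, and the base case $\binom{p-1}{0}=1$ closes the induction. There is no genuine obstacle here: the only point deserving a moment's care is the invertibility of~$\ell!$, which is precisely what the hypothesis $\ell<p$ provides, and the legitimacy of carrying the integer identity into~$\K$ via the specialization $1_\Z\to 1_\K$.
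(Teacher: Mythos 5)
Your proof is correct, but it takes a genuinely different route from the paper's. You compute $\binom{p-1}{\ell}$ directly: writing $\ell!\cdot\binom{p-1}{\ell}=(p-1)(p-2)\cdots(p-\ell)$ in $\Z$, mapping into $\K$, reducing each factor $p-j$ to $-j$, and cancelling the invertible $\ell!$ --- with the hypothesis $\ell<p$ entering exactly where it should, to make $\ell!$ a unit. The paper instead argues with a polynomial identity: by Fermat's little theorem (in effect the Frobenius identity $(x-1)^p=x^p-1$ in $\K[x]$), one gets $(x-1)^{p-1}=(x^p-1)/(x-1)=x^{p-1}+\cdots+x+1$, and comparing the coefficient of $x^\ell$ on both sides gives $(-1)^\ell\binom{p-1}{\ell}=1$ in $\K$. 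The two arguments are close cousins --- the Frobenius identity ultimately rests on $p\mid\binom{p}{j}$ for $0<j<p$, which is also precisely what drives your alternative Pascal-rule induction --- but your main argument is the more self-contained of the two, needing no polynomial ring and no appeal to Fermat, only the observation $p-j\equiv -j$. One cosmetic caveat: the displayed quotient $\frac{(p-1)\cdots(p-\ell)}{\ell!}$ is not literally an identity in $\Z$ to which one applies $1_\Z\to1_\K$ (the division is not defined there); the clean formulation is to transport the integer identity $\ell!\cdot\binom{p-1}{\ell}=(p-1)\cdots(p-\ell)$ into $\K$ and then divide by the unit $\ell!$, which is what your argument in substance does.
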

\begin{proof}
By Fermat's little theorem we have  $(x-1)^{p-1} = (x-1)^{p}/(x-1) = (x^{p}-1)/(x-1) = x^{p-1}+\cdots+1$ in $\K[x]$. Thus, the coefficient $(-1)^{\ell} \binom{p-1}{\ell}$ of $x^{\ell}$ in $(x-1)^{p-1}$
is equal to 1 in $\K$.
\end{proof}
Finally, by the previous lemma, the following equalities hold in $\K$:
$$\prod_{i=1}^d \frac{\binom{p-1}{m-i-1}}{{\binom{p-1}{i-1}}}=\prod_{i=1}^d \frac{(-1)^{m-i-1}}{(-1)^{i-1}} = (-1)^{md}.$$
This concludes the proof of Theorem~\ref{thm:1}~{(b)}.
\hfill\mbox{$\Box$}

\subsection{Proof of Theorem~\ref{thm:1}~\emph{(c)}.} \label{sec:thm1:case3}

This  non-obvious fact follows for instance from Theorem~\ref{thm:2}. We know by Identity~\eqref{eq:SP} in the characteristic~0 case that
$$\Sres_d((x-\alpha)^m, (x-\beta)^n)=
 \frac{s_d}{p_d}\cdot P_d^{(-n,-m)}\left( \dfrac{(x-\alpha)+(x-\beta)}{\beta-\alpha}\right)$$
 where
$ P_d^{(-n,-m)}\left( \dfrac{(x-\alpha)+(x-\beta)}{\beta-\alpha}\right)$ is the integer polynomial described in Identity~\eqref{integerPd}, and $$\frac{s_d}{p_d}= (\alpha-\beta)^{(m-d)(n-d)+d}\prod_{i=1}^d \dfrac{i!(m+n-d-i-1)!}{(m-i)!(n-i)!}.$$
We note that the denominator in this  last term does not vanish  in the mentioned characteristics while the numerator equals $0$, since it is a multiple of $(m+n-d-2)!$ for $d\ge 1$. We conclude the proof of Theorem~\ref{thm:1}(c) via the map $1_\Z \to 1_\K$.  \hfill\mbox{$\Box$}

\smallskip \noindent {\bf Remark.} Notice that Theorem~\ref{thm:1}(c)
also follows from  Theorem~\ref{thm:1}(b) and from Collins' fundamental theorem of
subresultants (\cite[\S4]{Collins73}, see also~\cite[\S2]{Habicht48} and~\cite[Theorem 1]{Collins67}) which states that for
an arbitrary field $\K$ and arbitrary $f, g\in \K[x]$, the subresultants and the
Euclidean remainder sequence of $f$ and~$g$ are closely related: if
$A_1:=f, A_2:=g, A_3,\ldots, A_\ell$ is an Euclidean polynomial remainder sequence
of $f$ and $g$ with $\deg(A_k)=n_k$ for $1\leq k \leq \ell$, then
there exist $c_1, \ldots, c_\ell, d_1, \ldots, d_\ell \in \K^\times$  such that
\begin{eqnarray*}
&&\Sres_{n_k}(f,g)= c_k \cdot A_k, \;\; \Sres_{n_{k-1}-1}(f,g)= d_k \cdot A_{k}, \text{ and } \\
&&\Sres_d(f,g)=0 \text{ for } n_{k}<d<n_{k-1}-1,
\end{eqnarray*}
for all $1\leq k \leq \ell$.
In particular, if two nonzero subresultants $\Sres_e(f,g)$ and $\Sres_{e'}(f,g)$  have the same degree for some $e'<e$, then they are constant multiples of each other, and all the intermediate subresultants $\Sres_{d}(f,g)$ are zero for $e'<d<e$. In
our situation, with $\max\{m,n\}\leq p:=\chara(\K)< m+n-d - 1$, and
$f=(x-\alpha)^m, g=(x-\beta)^n$ in $\K[x]$ with $\alpha\neq \beta$, we
have that $\Sres_{0}(f,g) \in\K^\times$ and also,
by Theorem~\ref{thm:1}(b), that $\Sres_{m+n-p-1}(f,g) \in\K^\times$.
Therefore, $\Sres_d (f,g)=0$ for $1\leq d < m+n-1-p$,
which reproves part (c) of Theorem~\ref{thm:1}.

\section{Proof of Theorem~\ref{thm:5}} \label{sec:thm5}
With the notation $r(i):=\dfrac{(i-1)!(m+n-d-i)!}{(m-i)!(n-i)!}$ introduced in \eqref{sd}, we have:
\begin{equation*}\label{eq:cd}
	\PSres_d((x-\alpha)^m,(x-\beta)^n) =(\alpha-\beta)^{(m-d)(n-d)}\prod_{i=1}^{d}r(i).
\end{equation*}
While in previous sections $d$ was considered as a fixed value, in this
section we view it as variable. Therefore, in order to avoid confusion, we
write $r_d(i):=r(i)$, to
emphasize also its dependence on~$d$. For all integers $d\geq 1$, we define
\[c(d):= \prod_{i=1}^{d}r_d(i)\] and note that it is an integer number, as
mentioned in the introduction, although the terms~$r_d(i)$ are not all
integers. We also set $c(0):=1$. 

The key observation for what follows is contained in the next lemma.

\begin{lemma}\label{lem:uv}
Let $\K$ be a field with
 $\chara(\K)=0$ or $\chara(\K)\ge m+n$.
  Set $u(d) := {c(d)}/{c(d-1)}$ for $1\le d< \min\{m,n\}$ and
$v(d) := {u(d+1)}/{u(d)}$ for $1\le d\le \min\{m,n\}-2$. Then,
for $1\le d\le \min\{m,n\}-2$,
\begin{equation} \label{eq:uk}
	v(d) =
\frac{d(m-d)(n-d)(m+n-d)}{(m+n-2d-1)(m+n-2d)^2(m+n-2d+1)}.
\end{equation}
\end{lemma}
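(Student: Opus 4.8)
The plan is to compute $v(d)$ directly by forming the ratio of consecutive values of the quantities $u(d)=c(d)/c(d-1)$, where $c(d)=\prod_{i=1}^d r_d(i)$ with $r_d(i)=\frac{(i-1)!(m+n-d-i)!}{(m-i)!(n-i)!}$. The main subtlety is that the product $c(d)$ depends on $d$ in two distinct ways: both the number of factors changes (the product runs to $i=d$), and each individual factor $r_d(i)$ depends on $d$ through the term $(m+n-d-i)!$ in its numerator. So passing from $c(d-1)$ to $c(d)$ is not simply a matter of appending one factor; every factor is modified. First I would isolate this dependence cleanly by writing $r_d(i)=(i-1)!\,(m+n-d-i)!/((m-i)!(n-i)!)$ and comparing $r_d(i)$ with $r_{d-1}(i)$: since only the $(m+n-d-i)!$ term moves, their ratio is
\[
\frac{r_d(i)}{r_{d-1}(i)}=\frac{(m+n-d-i)!}{(m+n-d-i+1)!}=\frac{1}{m+n-d-i+1}.
\]

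Armed with this, I would express $u(d)=c(d)/c(d-1)$ as a telescoping-style product. Writing $c(d)=\Big(\prod_{i=1}^{d-1}r_d(i)\Big)\,r_d(d)$ and $c(d-1)=\prod_{i=1}^{d-1}r_{d-1}(i)$, the ratio becomes
\[
u(d)=r_d(d)\cdot\prod_{i=1}^{d-1}\frac{r_d(i)}{r_{d-1}(i)}
=r_d(d)\cdot\prod_{i=1}^{d-1}\frac{1}{m+n-d-i+1}.
\]
The last product runs over $i=1,\dots,d-1$, so its denominator is the falling product $(m+n-d)(m+n-d-1)\cdots(m+n-2d+2)$, i.e. $(m+n-d)!/(m+n-2d+1)!$. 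Combining this with $r_d(d)=\frac{(d-1)!(m+n-2d)!}{(m-d)!(n-d)!}$ gives a closed form for $u(d)$ as a ratio of factorials. I would simplify $u(d)$ fully into factorials in $m,n,d$ before proceeding, since a clean expression for $u(d)$ is what makes the final step mechanical.

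The final step is to form $v(d)=u(d+1)/u(d)$. Having $u(d)$ as a product/quotient of factorials, the ratio $u(d+1)/u(d)$ collapses to a rational function in $m,n,d$: each factorial contributes only its boundary terms when shifting $d\to d+1$. I expect every factorial to cancel down to a handful of linear factors, and after collecting them one should read off exactly
\[
v(d)=\frac{d(m-d)(n-d)(m+n-d)}{(m+n-2d-1)(m+n-2d)^2(m+n-2d+1)}.
\]
The main obstacle is purely bookkeeping: one must track the index ranges of the internal products carefully (the product in $u(d)$ has $d-1$ factors, and the endpoints shift when $d$ increases), and make sure the factorial cancellations in $u(d+1)/u(d)$ are done with the correct off-by-one boundaries so that the $(m+n-2d)^2$ squared factor in the denominator emerges correctly rather than as an unsquared or mismatched term. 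The hypothesis $\chara(\K)=0$ or $\chara(\K)\ge m+n$ guarantees that all the integers appearing in numerators and denominators (all bounded by $m+n$) are invertible in $\K$, so the identity, once established over $\Q$, transfers to $\K$ via the map $1_\Z\to 1_\K$.
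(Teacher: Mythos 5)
Your proposal is correct and follows essentially the same route as the paper's own proof: both compute the ratio $r_d(i)/r_{d-1}(i)=1/(m+n-d-i+1)$, write $u(d)=r_d(d)\cdot\prod_{i=1}^{d-1}r_d(i)/r_{d-1}(i)$, and then obtain $v(d)=u(d+1)/u(d)$ by cancellation, with the same observation that the characteristic hypothesis makes all denominators (integers below $m+n$) invertible in $\K$. The only cosmetic difference is that you expand $u(d)$ fully into factorials before forming the ratio, whereas the paper keeps $r_d(d)=(d-1)!\binom{m+n-2d}{m-d}$ in binomial form; the computations are identical.
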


\begin{proof}
We have that $u(1)=c(1)= \binom{m+n-2}{m-1}$  and for $d\ge 2$,
$$\frac{r_d(i)}{r_{d-1}(i)} = \frac{1}{(m+n-d-i+1)}.$$ Therefore
\begin{align*}u(d)& = \frac{c(d)}{c(d-1)}
=\frac{\prod_{i=1}^{d}r_d(i)}{\prod_{i=1}^{d-1}r_{d-1}(i)}
=r_d(d) \cdot \prod_{i=1}^{d-1} \frac{r_d(i)}{r_{d-1}(i)}
\\
&= (d-1)! \binom{m+n-2d}{m-d} \cdot \prod_{i=1}^{d-1} \frac{1}{m+n-d-i+1}.
\end{align*}
Hence \begin{align*}v(d) & =\frac{u(d+1)}{u(d)}\\ & =
d\frac{(m-d)(n-d)}{(m+n-2d-1)(m+n-2d)}
\cdot
\frac{(m+n-d)}{(m+n-2d)(m+n-2d+1)},
\end{align*}
which is the desired expression.

Note that the only numbers that appear in the denominators of $u(d)$ and
of $v(d)$ are products of integers of absolute value less than $m+n$, which
are invertible in $\K$ by the assumption on the characteristic of $\K$.
\end{proof}

Based on Lemma~\ref{lem:uv}, we now design an algorithm that computes all
principal subresultants $\PSres_d((x-\alpha)^m,(x-\beta)^n)$ with $1\le
d<\min\{m,n\}$ in {$O(\min\{m,n\}+\log(mn))$} operations in $\K$, thus proving
Theorem~\ref{thm:5}.

\begin{itemize} \item First, $v(1), \ldots, v(\min\{m,n\}-2)$ are computed by~using~\eqref{eq:uk} in $O(1)$ arithmetic operations each, for a total of
$O(\min\{m,n\})$ operations in~$\K$.
\item Then, $u(1), \ldots, u(\min\{m,n\}-1)$ are determined, by computing
$u(1) := \binom{m+n-2}{m-1}$ using Lemma~\ref{lem:binomial}, in
$O(\min\{m,n\})$ arithmetic operations in~$\K$, and by computing iteratively
$u(d) = u(d-1)\cdot v(d-1)$, for $ 2\le d<\min\{m,n\}$, in $O(\min\{m,n\})$
operations in $\K$.
\item Next we compute the elements $c(1), \ldots, c(\min\{m,n\}-1)$ iteratively by
$c(d) = u(d) \cdot c(d-1)$ for $1\le d<\min\{m,n\}$, in $O(\min\{m,n\})$
operations in $\K$.
\end{itemize}
At this stage, it remains to compute all the powers
$h(d):=(\alpha-\beta)^{(m-d)(n-d)}$ for $0\le d<\min\{m,n\}$, and finally to
output $\PSres_d((x-\alpha)^m,(x-\beta)^n)= c(d) \cdot h(d)$, for $0\le
d<\min\{m,n\}$. This is done as follows.

\begin{itemize}
\item
First, all the elements $\gamma(d):=(\alpha-\beta)^{2d+1-m-n}$, for
$d<\min\{m,n\}$, are computed using $O(\log(m+n) + \min\{m,n\})$ operations in
$\K$. This can be done by computing $\gamma(0) := (\alpha-\beta)^{1-m-n}$ by
binary powering, then unrolling the recurrence $\gamma(d+1) := (\alpha-\beta)^2
\cdot \gamma(d)$ for $d<\min\{m,n\}-1$.
\item
Next, $h(0) :=(\alpha-\beta)^{mn}$ is computed by binary powering, and then
all $h(d)$, for $1\le d<\min\{m,n\}$, by repeated products using $h(d+1) :=
\gamma(d) \cdot h(d)$, for a total cost of $O(\log(mn) + \min\{m,n\})$
operations in $\K$.
\item
Finally, we compute and return the values
$\PSres_d((x-\alpha)^m,(x-\beta)^n)= c(d) \cdot h(d)$, for $0\le
d<\min\{m,n\}$, using $O(\min\{m,n\})$ operations in~$\K$.
\end{itemize}
Adding up the various arithmetic costs proves Theorem~\ref{thm:5}.
\hfill\mbox{$\Box$}

\section{{Connections to previous results}} \label{sec:previous}
Theorem~\ref{thm:2} is closely connected to some previous results. First we discuss the connection to the work \cite{BDKSV17}.
Second, we explain the relationship of the present work to classical results on  {\em  Pad\'e approximation}.

\subsection{Connection with~\cite{BDKSV17}}

We show that the expression for the subresultant obtained in~\cite{BDKSV17},
though not expressed in terms of Jacobi polynomials, is  equivalent to the one
in Theorem~\ref{thm:2}. First, let us recall the main results
of~\cite{BDKSV17}.

\begin{theorem} \label{th:old} \cite[Theorems 1.1 and 1.2]{BDKSV17}\\ Let $\K$ be a field and $\alpha,\beta\in \K$.  Set $d,m, n\in \N$ with  $0\le d<\min\{m,n\}$. Then,
{\small \[
\Sres_d((x-\alpha)^m,(x-\beta)^n) =
{(\alpha-\beta)^{(m-d)(n-d)} \sum_{j=0}^d c_j(m,n,d)(x-\alpha)^j(x-\beta)^{d-j}},
\]}
where the coefficients $c_0(m,n,d), \ldots, c_d(m,n,d)$
are defined by
$$
c_0(m,n,d)=	
\displaystyle{\prod_{i=1}^{d}}\dfrac{(i-1)!\,(m+n-d-i-1)!}{(m-i-1)!(n-i)!},$$ and
$$c_j(m,n,d)=
\frac{\binom{d}{j}\binom{n-d+j-1}{j}}{\binom{m-1}{j}} \, c_0(m,n,d), \quad \text{for} \quad 1\le j\le d.$$
\emph{(}Here  {$c_0(m,n,0)=1$}, following the convention that an empty product equals~1.\emph{)}\\
Moreover, for $0\le j\le d$,  $c_j(m,n,d)\in \Z \text{ or } \Z/p\Z$  if  $\chara(\K)=0$  or $\chara(\K)=p$, respectively.
\end{theorem}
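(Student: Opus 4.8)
The plan is to deduce Theorem~\ref{th:old} directly from Theorem~\ref{thm:2}, thereby exhibiting the expression of~\cite{BDKSV17} as nothing more than the reading of the Jacobi formula in the Bernstein basis $\{(x-\alpha)^j(x-\beta)^{d-j}\}$. First I would combine Theorem~\ref{thm:2} with the expansion~\eqref{integerPd} of $P_d^{(-n,-m)}$ in this basis, which gives
\[
\Sres_d((x-\alpha)^m,(x-\beta)^n) = \frac{s_d}{p_d\,(\alpha-\beta)^d}\sum_{j=0}^d \binom{n-d+j-1}{j}\binom{m-j-1}{d-j}(x-\alpha)^j(x-\beta)^{d-j}.
\]
Substituting the value of $s_d/p_d$ recorded in~\eqref{s_d/p_d} and pulling out the factor $(\alpha-\beta)^{(m-d)(n-d)}$, the coefficient of $(x-\alpha)^j(x-\beta)^{d-j}$ takes the closed form
\[
c_j(m,n,d) = \binom{n-d+j-1}{j}\binom{m-j-1}{d-j}\prod_{i=1}^d\frac{i!\,(m+n-d-i-1)!}{(m-i)!\,(n-i)!}.
\]
It then remains only to check that this agrees with the two expressions asserted in the theorem.

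For $j=0$ the closed form reads $c_0 = \binom{m-1}{d}\prod_{i=1}^d \frac{i!\,(m+n-d-i-1)!}{(m-i)!\,(n-i)!}$, since $\binom{n-d-1}{0}=1$ and $\binom{m-1}{d}$ is the $j=0$ value of the second binomial. I would match this against the claimed $\prod_{i=1}^d \frac{(i-1)!\,(m+n-d-i-1)!}{(m-i-1)!\,(n-i)!}$ by cancelling the common factors $(m+n-d-i-1)!$ and $(n-i)!$, which reduces the equality to $\binom{m-1}{d}\prod_{i=1}^d \frac{i}{m-i} = 1$. This holds because $\prod_{i=1}^d \frac{i}{m-i} = d!\,(m-d-1)!/(m-1)! = \binom{m-1}{d}^{-1}$.

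For general $j$, dividing the closed form by $c_0$ gives $c_j/c_0 = \binom{n-d+j-1}{j}\binom{m-j-1}{d-j}/\binom{m-1}{d}$, so matching the stated ratio $\binom{d}{j}\binom{n-d+j-1}{j}/\binom{m-1}{j}$ amounts to the trinomial revision identity $\binom{m-1}{d}\binom{d}{j} = \binom{m-1}{j}\binom{m-1-j}{d-j}$ (counting subsets of subsets), together with $\binom{m-1-j}{d-j}=\binom{m-j-1}{d-j}$. This elementary binomial bookkeeping is the only computational content of the argument, and is where I expect the (entirely routine) effort to lie; there is no genuine obstacle, since all the structural work has already been carried out by Theorem~\ref{thm:2}.

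Finally, for the integrality statement I would invoke Lemma~\ref{lem:intsubres}: its proof specializes at $\alpha=0$, $\beta=-1$ and shows precisely that the Bernstein-basis coefficients $c_j(m,n,d)$ are integers. The characteristic-$p$ case, namely $c_j(m,n,d)\in\Z/p\Z$ when $\chara(\K)=p$, then follows as throughout the paper via the ring homomorphism $1_\Z \to 1_\K$.
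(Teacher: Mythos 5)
Your proposal is correct and is essentially the paper's own argument: Section~5 of the paper establishes the equivalence of Theorem~\ref{th:old} with Theorem~\ref{thm:2} by exactly the same substitution of the Bernstein expansion~\eqref{integerPd} and the constant~\eqref{s_d/p_d}, reducing to the same binomial identity, which the paper simplifies directly to a factorial identity while you split it into the $j=0$ case plus trinomial revision --- a purely cosmetic difference. Your appeal to Lemma~\ref{lem:intsubres} for the integrality claim and to the map $1_\Z\to 1_\K$ for positive characteristic likewise matches the paper's treatment.
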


\noindent
\paragraph*{Proof that Theorems~\ref{th:old}  and~\ref{thm:2} are equivalent}\label{10}

We want to prove that
\begin{equation}\label{th2=th9}
(\alpha-\beta)^{(m-d)(n-d)}\sum_{j=0}^d c_j(m,n,d)(x-\alpha)^j(x-\beta)^{d-j}=\frac{s_d\,P_d^{(-n,-m)}\left( \dfrac{2x-\alpha-\beta}{\beta-\alpha}\right)}{p_d},\end{equation}
where
 $$c_j(m,n,d)= \frac{\binom{d}{j}\binom{n-d+j-1}{j}}{\binom{m-1}{j}}
 \displaystyle{\prod_{i=1}^{d}}\dfrac{(i-1)!\,(c-i)!}{(m-i-1)!(n-i)!}$$
 for $c:=m+n-d-1$.

By
 \eqref{s_d/p_d} the right-hand side of \eqref{th2=th9} equals
$$(\alpha-\beta)^{(m-d)(n-d)+d}\prod_{i=1}^d \dfrac{i!(c- {i})!}{(m-i)!(n-i)!}\,
 P_d^{(-n,-m)}\left( \dfrac{2x-\alpha-\beta}{\beta-\alpha}\right),$$
where by \eqref{integerPd},

\begin{align*}
{{(\alpha-\beta)^d}}&  P_d^{(-n,-m)}\left(\frac{2x-\alpha-\beta}{\beta-\alpha}\right)\\
&= \sum_{j=0}^d \binom{n-d+j-1}{j}\binom{m-j-1}{d-j}(x-\alpha)^j(x-\beta)^{d-j}.\end{align*}
Thus,   we only need to verify that
\begin{align*}\binom{n-d+j-1}{j}&\binom{{m}-j-1}{d-j} \prod_{i=1}^d\frac{i!(c-i)!}{(m-i)!(n-i)!}\\&=\frac{\binom{d}{j}\binom{n-d+j-1}{j}}{\binom{m-1}{j}}
\prod_{i=1}^d\frac{(i-1)!(c-i)!}{(m-i-1)!(n-i)!},\end{align*}
i.e. after simplification, that
$$\frac{(m-1)!}{(m-d{-1})!}\prod_{i=1}^d\frac{i!}{(m-i)!}=d!\prod_{i=1}^d\frac{(i-1)!}{(m-i-1)!},$$
which trivially holds.
{\hfill\mbox{$\Box$}}

\subsection{Connection with Pad\'e approximation}

In this subsection we show that Theorem~\ref{thm:2} and
Corollary~\ref{coro:cofact} are also equivalent to classical descriptions of
some Pad\'e approximants via Gauss hypergeometric functions.

The starting point is a theorem due to Pad\'e~\cite{Pade01}, stating that the
$[m/n]$ Pad\'e approximation in $\mathbb{C}(x)$ to $(1-x)^k$ is the ratio of
hypergeometric functions
\begin{equation} \label{eq:pade}
	\frac{\,_2F_1(-m,-k-n;-m-n;x)}{\,_2F_1(-n,k-m;-m-n;x)}.
\end{equation}
That result had been previously obtained, by different methods and under several additional assumptions, by Laguerre~\cite{Lag85}
and Jacobi~\cite{Jacobi1859}.
See also~\cite[Eq.~(Pad\'e 5), p. 252]{Perron13},~\cite[p.~65]{Baker75}, \cite{Iserles79} and Theorem 4.1
in~\cite{GGZ12}.

There is also a well-known connection between subresultants and Pad\'e approximants (c.f. \cite[Corollary 5.21]{GaGe13}): the $[m/n]$ Pad\'e approximation in
$\mathbb{C}(x)$ to $(1-x)^k$, for integer $k\ge m$, equals
\begin{equation} \label{eq:pade1}
	\frac{\Sres_m(x^{m+n+1}, (1-x)^k)}{G_m(x^{m+n+1}, (1-x)^k)}\,=\,(-1)^k \frac{\Sres_m(x^{m+n+1}, (x-1)^k)}{G_m(x^{m+n+1}, (x-1)^k)},
\end{equation}	
where $G_m:=G_m(x^{m+n+1}, (x-1)^k)$ is the polynomial coefficient of degree $\le n$ in the B\'ezout expression
$$\Sres_m(x^{m+n+1}, (x-1)^k)= F_m \cdot x^{m+n+1} + G_m \cdot (x-1)^k.$$
Identity~\eqref{eq:pade} implies that $$ \frac{\,_2F_1(-m,-n-k;-m-n;x)}{\,_2F_1(-n,k-m;-m-n;x)}=(-1)^k\frac{\Sres_m(x^{m+n+1}, (x-1)^k)}{G_m(x^{m+n+1}, (x-1)^k)}.$$
We showed earlier that the fact that
  $x^{m+n+1}$ and $(x-1)^k$  are coprime  polynomials  implies that $\deg(\Sres_m(x^{m+n+1}, (x-1)^k))=m$,  and  it is also immediate  to verify that
$\Sres_m(x^{m+n+1}, (x-1)^k)$ and $G_m(x^{m+n+1}, (x-1)^k)$ are coprime. Therefore, since the degree of $$\,_2F_1(-m,-k-n;-m-n;x)=\sum_{i=0}^m(-1)^i \binom{m}{i}\frac{(-k-n)_i}{(-m-n)_i}x^i, $$ equals $m$, one derives that there exists a non-zero $\lambda\in \mathbb{C}$ such that
\begin{align*}
&\Sres_m(x^{m+n+1}, (x-1)^k)= \lambda \cdot \,_2F_1(-m,-k-n;-m-n;x),\\
&G_m(x^{m+n+1}, (x-1)^k)= (-1)^k \lambda \cdot \,_2F_1(-n,k-m;-m-n;x).
\end{align*}
Here, $\lambda$ can be computed by comparing  the leading coefficients of \\$\Sres_m(x^{m+n+1}, (x-1)^k)$ and
$\,_2F_1(-m,-k-n;-m-n;x)$:
\begin{align*}
\lambda&= (-1)^m\frac{(k+n-m)!(m+n)!}{ (k+n)!n!} \,\PSres_m(x^{m+n+1}, (x-1)^k)\\
&= (-1)^{(n+1)(k-m)+m}\prod_{i=1}^m \frac{(i-1)!(k+n-i)!}{(k-i)!(m+n-i)!},
\end{align*}
by Identity~\eqref{sd}.

Now,
according to \cite[(1.6)]{EMOT1953}, see also \cite[(1.5)]{Koornwinder84}:
\begin{align*}
&\,_2F_1(-m,-k-n;-m-n;x)= \frac{1}{\binom{m+n}{m}}P_m^{(-k,-m-n-1)}\left( {2x-1}\right), \\
&\,_2F_1(-n,k-m;-m-n;x)= \frac{1}{\binom{m+n}{m}}P_n^{(k,-m-n-1)}\left({2x-1}\right),
\end{align*}
while, according to our Theorem~\ref{thm:2} and Corollary~\ref{coro:cofact},
\begin{align*}
&\Sres_m(x^{m+n+1}, (x-1)^k)=\mu\, P_m^{(-k,-m-n-1)}(2x-1),\\
&G_m(x^{m+n+1}, (x-1)^k)= (-1)^k \overline\mu \,P_n^{(k,-m-n-1)}(2x-1),
\end{align*}
for \begin{align*}
\mu & :=  (\alpha-\beta)^{(m-d)(n-d)+d}\prod_{i=1}^d \dfrac{i!(m+n-d-i-1)!}{(m-i)!(n-i)!} \quad \mbox{and}\\
\overline \mu&:=(-1)^{(n+1)(k-m)+m}\prod_{i=1}^m  \frac{i!(k+n-i)!}{(k-i)!(m+n+1-i)!}.\end{align*}
This shows the equivalence of the results  for $\alpha=0,\beta=1$, since
$ \lambda=\binom{m+n}{m}\overline \mu$.\\
In order to deduce Theorem~\ref{thm:2} and Corollary~\ref{coro:cofact} for any $\alpha,\beta$ we apply the usual changes of variables formulas that can be found in the now classical book \cite{AJ2006}:
\begin{align*}
& \Sres_d(f (x-\alpha),g(x-\alpha))=\Sres_d(f,g)(x-\alpha),\\
&\Sres_d(f(\gamma x),g(\gamma x))=\gamma^{mn-d(d+1)}\Sres_d(f,g)(\gamma x).
\end{align*}
Therefore,
\begin{align*}
& \Sres_d((x-\alpha)^m,(x-\beta)^n)=\Sres_d(x^m,(x-(\beta-\alpha))^n)(x-\alpha),\\
&\Sres_d(x^m,(x-\gamma)^n)(\gamma x)=\frac{1}{\gamma^{mn-d(d+1)} } \,\Sres_d ((\gamma x)^m, (\gamma x-\gamma )^n)\\ &
 \qquad \qquad = \frac{1}{\gamma^{mn-d(d+1)} } \,\Sres_d (\gamma^mx^m, \gamma^n(x-1)^n)\\ & \qquad \qquad
= \frac{\gamma^{m(n-d)+n(m-d)}}{\gamma^{mn-d(d+1)}}\,\Sres_d(x^m,(x-1)^n)\\
& \qquad \qquad
 =   \gamma^{(m-d)(n-d)+d}\,\Sres_d(x^m,(x-1)^n).
\end{align*}
Hence, since we have just proven that $\Sres_d(x^m,(x-1)^n)= \tilde \mu\, P_d^{-n,-m}(2x-1)$
for $\tilde \mu = \prod_{i=1}^d \frac{i!(m+n-d-i-1)!}{(m-i)!(n-i)!}$, we deduce that
$$\Sres_d(x^m,(x-(\beta-\alpha))^n)((\beta-\alpha) x)= \tilde\mu\,(\beta-\alpha)^{(m-d)(n-d)+d} P_d^{-n,-m}(2x-1),$$
which implies that
$$\Sres_d(x^m,(x-(\beta-\alpha))^n)(x)= \tilde\mu\,(\beta-\alpha)^{(m-d)(n-d)+d} P_d^{-n,-m}\left(2\left(\frac{x}{\beta-\alpha}\right)-1\right).$$
We conclude with
\begin{align*}
& \Sres_d((x-\alpha)^m,(x-\beta)^n)=\Sres_d(x^m,(x-(\beta-\alpha))^n)(x-\alpha)\\
&\qquad\qquad  = \tilde\mu\,(\beta-\alpha)^{(m-d)(n-d)+d} P_d^{-n,-m}\left(2\left(\frac{x-\alpha}{\beta-\alpha}\right)-1\right)\\
& \qquad\qquad =   \tilde\mu\,(\beta-\alpha)^{(m-d)(n-d)+d} P_d^{-n,-m}\left(\frac{2x-\alpha-\beta}{\beta-\alpha}\right),
\end{align*}
as stated in Theorem~\ref{thm:2}.

Note that similar arguments allow to deduce $G_d((x-\alpha)^m,(x-\beta)^n)$ from $G_d(x^m,(x-1)^n)$.

\section{Final remarks} \label{sec:final}

\subsection{Fast computation of cofactors} One can use similar ideas as in the
proof of Theorem~\ref{thm:1} in order to compute the cofactors $F_d(x)$ and
$G_d(x)$ in Corollary~\ref{coro:cofact} using $O( \max\{m,n\}+\log(mn) )$
arithmetic operations in $\K$,  when
$\chara(\K)=0$ or $\chara(\K)\geq \max\{m,n\}$.
More precisely, we have the following result, whose proof is omitted:

\begin{theorem} \label{thm:13}
Let $d,m, n\in \N$ with  $1\le d<\min\{m,n\}$
and let $\K$ be a field with $\chara(\K)=0$ or $\chara(\K)\geq \max\{m,n\}$, and $\alpha,\beta\in \K$ with $\alpha\ne \beta$.
 Let $F_d$ and $G_d$ be as defined in \eqref{bezout2}.  Then,
\begin{enumerate} \item[\emph{(a)}]  if $\chara(\K)=0$ or $\chara(\K)\ge m+n-d$, then all  the coefficients of $F_d$ and $G_d$
 can be computed using $O(\max\{m,n\}+\log(mn))$
arithmetic operations in~$\K$,
\item[\emph{(b)}] when $\chara(\K)= m+n-d-1$, the following equalities hold in $\K$
\begin{eqnarray*}
 F_d &=& (-1)^{dm+1} (\alpha-\beta)^{(m-d-1)(n-d-1)} (x-\alpha)^{n-d-1} , \\
        G_d &=& (-1)^{dm} (\alpha-\beta)^{(m-d-1)(n-d-1)} (x-\beta)^{m-d-1} ,
\end{eqnarray*}
and  the coefficients of $F_d$ and $G_d$ can be computed using  $O(\max\{m,n\}+\log(mn))$ arithmetic operations in~$\K$,
\item[\emph{(c)}] if $m+n-d-1> \chara(\K) \ge \max\{m,n\}$ then
$$F_d=G_d=0. $$
\end{enumerate}
\end{theorem}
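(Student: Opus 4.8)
The plan is to mirror the proofs of Theorems~\ref{thm:1} and~\ref{thm:3}, exploiting the fact that, by Corollary~\ref{coro:cofact}, the cofactors $F_d$ and $G_d$ are --- exactly like the subresultant itself --- scalar multiples of Jacobi polynomials after the \emph{same} affine change of variables $z=(2x-\alpha-\beta)/(\beta-\alpha)$, namely of $P_{n-d-1}^{(-n,m)}$ and $P_{m-d-1}^{(n,-m)}$ respectively, sharing the common prefactor $s_d/p_d$ worked out in~\eqref{s_d/p_d}. Throughout I would use that the B\'ezout identity~\eqref{bezout2} holds in \emph{every} characteristic, since it is the expansion of the defining determinant~\eqref{srs} along its last column, so that its cofactors $F_d,G_d$ are genuine polynomials over $\Z[\alpha,\beta]$ with $\deg F_d<n-d$ and $\deg G_d<m-d$; and that these cofactors are \emph{unique} with those degree bounds, because $(x-\alpha)^m$ and $(x-\beta)^n$ are coprime for $\alpha\ne\beta$.

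For part~(a), I would first observe that each of $P_{n-d-1}^{(-n,m)}$ and $P_{m-d-1}^{(n,-m)}$ solves its own standard Jacobi differential equation; transporting these through the affine substitution yields, exactly as in the passage from~\eqref{eq:diff} to Theorem~\ref{thm:3}, two second-order linear ODEs in $x$ with polynomial coefficients of degree at most $2$, hence two three-term recurrences on the monomial coefficients of $F_d$ and of $G_d$. The leading coefficients serve as initial data: they are the explicit leading coefficients of the Jacobi polynomials times the prefactor $s_d/p_d$, the latter being a power of $(\alpha-\beta)$ (computed by binary powering) times a product of $d<\min\{m,n\}$ factorial ratios, all computable in $O(\min\{m,n\}+\log(mn))$ operations as in Theorem~\ref{thm:1}(a). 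Unrolling each recurrence then produces every coefficient at $O(1)$ cost per step; since $\deg F_d=n-d-1$ and $\deg G_d=m-d-1$ there are $O(\max\{m,n\})$ of them, which accounts for the $\max$ (rather than $\min$) in the complexity bound. The one point requiring care is to check that every denominator arising in the two recurrences is legitimate under the hypothesis; a short computation shows that, for $F_d$, solving for the coefficient of $x^k$ requires dividing by $(n-d-1-k)(m-d+k)$, and symmetrically by $(m-d-1-k)(n-d+k)$ for $G_d$, each a product of two positive integers strictly smaller than $m+n-d$, hence invertible when $\chara(\K)=0$ or $\chara(\K)\ge m+n-d$.

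For part~(b) I would avoid specializing the Jacobi formulas and instead verify the claimed closed forms directly against~\eqref{bezout2}. Writing $p:=m+n-d-1=\chara(\K)$ and substituting the proposed $F_d,G_d$, the products $F_d\,(x-\alpha)^m$ and $G_d\,(x-\beta)^n$ become $\pm(\alpha-\beta)^{(m-d-1)(n-d-1)}(x-\gamma)^{p}$ with $\gamma\in\{\alpha,\beta\}$; the Frobenius identities $(x-\gamma)^p=x^p-\gamma^p$ and $\alpha^p-\beta^p=(\alpha-\beta)^p$ then collapse the $x^p$ terms and leave $(-1)^{md}(\alpha-\beta)^{(m-d-1)(n-d-1)+p}$, whose exponent equals $(m-d)(n-d)+d$ after the elementary identity $(m-d-1)(n-d-1)+(m+n-d-1)=(m-d)(n-d)+d$. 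This matches the constant value of the subresultant furnished by Theorem~\ref{thm:1}(b). Since the proposed cofactors obey the degree bounds, uniqueness identifies them as the genuine $F_d,G_d$; the complexity is clear, each being a sign times a power of $(\alpha-\beta)$ (binary powering) times a monomial expanded in $O(\max\{m,n\})$ operations.

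For part~(c), the quickest route is again through uniqueness: by Theorem~\ref{thm:1}(c) the subresultant vanishes in the range $\max\{m,n\}\le\chara(\K)<m+n-d-1$, so~\eqref{bezout2} reads $0=F_d\,(x-\alpha)^m+G_d\,(x-\beta)^n$, and coprimality together with the degree bounds forces $F_d=G_d=0$. Alternatively, one may copy the argument proving Theorem~\ref{thm:1}(c): the common prefactor $s_d/p_d$ of~\eqref{s_d/p_d} vanishes modulo $p=\chara(\K)$ because, for $d\ge 1$, its numerator contains the factor $(m+n-d-2)!$, which is divisible by $p$ precisely when $p\le m+n-d-2$. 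In either case the bulk of the difficulty sits in part~(a); I expect the main obstacle there to be the bookkeeping of the two transported Jacobi recurrences and the verification that their denominators stay below $m+n-d$, while parts~(b) and~(c) follow quickly from Theorem~\ref{thm:1} and the coprimality argument.
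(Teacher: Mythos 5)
The paper states Theorem~\ref{thm:13} without proof (it says explicitly that the proof is omitted, hinting only that one can use ``similar ideas as in the proof of Theorem~\ref{thm:1}''), and your proposal is correct and carries out exactly that plan: Corollary~\ref{coro:cofact} together with the transported Jacobi differential equations yields downward three-term recurrences for the coefficients of $F_d$ and $G_d$ with the indicial factors you claim --- I checked that solving for the coefficient of $x^k$ does require dividing by $(n-d-1-k)(m-d+k)$ for $F_d$ and $(m-d-1-k)(n-d+k)$ for $G_d$, all positive integers below $m+n-d$ --- while the initial data come from the prefactors in~\eqref{s_d/p_d} and Corollary~\ref{coro:cofact}. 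Your parts (b) and (c) are also sound: (b) by the Frobenius identity $(x-\gamma)^p=x^p-\gamma^p$, the exponent identity $(m-d-1)(n-d-1)+(m+n-d-1)=(m-d)(n-d)+d$, and uniqueness of the cofactors under the degree bounds (which holds since $(x-\alpha)^m$ and $(x-\beta)^n$ are coprime), and (c) directly from Theorem~\ref{thm:1}(c) plus coprimality; no gaps.
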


\subsection{Comparison with generic algorithms} As mentioned in the
introduction, the fastest algorithms for subresultants of polynomials of
degree at most~$n$ have arithmetic complexity $O(\M(n) \log n)$, where $\M(n)$
denotes the arithmetic complexity of degree-$n$ polynomial
multiplication~\cite{Reischert97,LR2001,Lecerf18}. These algorithms can
compute either one selected polynomial subresultant, or all principal
subresultants. Using FFT-based algorithms for polynomial
multiplication~\cite[Ch.~8]{GaGe13}, their complexity $O(\M(n) \log n)$
becomes $O(n \log^2 n \log\log n)$, which is quasi-linear up to
polylogarithmic factors. These algorithms are generic in the sense that they
apply to arbitrary polynomials, and they work in any characteristic.

The algorithms described in the current article are specific to very
structured polynomials, namely pure powers of linear polynomials, and they
achieve purely linear arithmetic complexity in their maximum degree~$n$. They
also compute either one selected polynomial subresultant, or all principal
subresultants, but they are restricted to characteristic zero or large enough.
The reason is that they require divisions, which is the price to pay for
optimality. We leave as an open question whether purely linear arithmetic
complexity can be also achieved in arbitrary characteristic.

Another interesting difference is that, while classical algorithms for the
order-$d$ subresultant spend more time when~$d$ is small (typically, the
resultant computation, corresponding to $d=0$, is the most expensive), our
algorithms spend less time when~$d$ is small. For more on practical
comparisons, see \S\ref{ssec:practical}.

\subsection{Algorithmic optimality} The complexity result $O(
\min\{m,n\}+\log(mn))$ is quasi-optimal for Theorem~\ref{thm:5}, since the
size of the output is $\min\{m,n\}$. On the other hand, the complexity result
$O( \min\{m,n\}+\log(mn) )$ for Theorem~\ref{thm:1} is not optimal when $d$ is
small compared to $m$ and~$n$. A natural question is whether an algorithm of
arithmetic complexity $O( d+\log(mn))$ may exist. While this is true for
$d=0$, we believe that this is unlikely for $d\geq 1$, and moreover we
suspect that there is no algorithm for Theorem~\ref{thm:1} with arithmetic
complexity polynomial in both $d$ and $\log(mn)$. Otherwise, we could in
particular compute the first principal subresultant
\[\PSres_1((x-\alpha)^m,(x-\beta)^n)=(\alpha-\beta)^{(m-1)(n-1)}
\binom{m+n-2}{m-1},\]
in arithmetic complexity \emph{polynomial in $\log(mn)$}. This does not seem
plausible, since it would imply in particular that the central binomial
coefficient $\binom{2N}{N}$ could be computed using an arithmetic complexity
polynomial in $\log N$. Although no proof exists, this is generally believed
to be impossible.

\subsection{Fast factorials} It is possible to further improve some of our
complexity results by using Strassen's algorithm~\cite{Strassen76} for the
computation of~$N!$ in arithmetic complexity $O(\M(\sqrt{N}) \log N)$, which
becomes quasi-linear in~$\sqrt{N}$ when FFT-based algorithms are used for
polynomial multiplication. For instance, for fixed $d$, the principal
subresultant $\PSres_d((x - \alpha)^m, (x - \beta)^n)$ can be computed using
fast factorials in $$O(d + \log(mn) + \M(\sqrt{\min\{ m-d,n-d}\}) \log \min\{
m-d,n-d \} ),$$ operations in $\K$. The same cost can also be achieved for the
computation of the whole polynomial subresultant $\Sres_d((x - \alpha)^m, (x -
\beta)^n)$ in Theorem~\ref{thm:1}.

\subsection{Bit complexity} We have only discussed arithmetic complexity. When
$\K$ is a finite field, this is perfectly realistic, since arithmetic
complexity reflects quite well the running time of the algorithms. When $\K$
is infinite, for instance when $\K=\Q$, assuming operations in $\K$ at unit
cost is not realistic anymore, so studying bit complexity becomes a much more
pertinent model. Over $\K=\Q$, our algorithms in Sections~\ref{sec:thm1}
and~\ref{sec:thm5} have very good complexity behaviors in this model too.
Indeed, they only involve binary powering, computation of factorials and
binomials, unrolling of recurrences, which can be computed in quasi-optimal
bit complexity. This is confirmed by the timings in Tables~\ref{tab:timings}
and~~\ref{tab:timings-bis}, which appear to be indeed quasi-linear in the
output size.

\begin{table}[t]
  \centering
\begin{tabular}{|c|c|c|r|r|r|}
  \hline
  $\#$ & $(\alpha,\beta)$ & $(m,n,d)$ & {\sf Generic 1} & {\sf New 1} & {\sf Output size}\\
  \hline
  {\sf T1} & (10, 11) & (121, 92, 32) & 0.164 & 0.001 & $112\,125$ \\
  {\sf T2} & (13, 17) & (196, 169, 84)   & 5.439 & 0.002 & $2\,463\,994$ \\
  {\sf T3} & (12, 19) & (227, 245, 87)   & 23.543 & 0.006 & $6\,996\,907$ \\
  {\sf T4} & (12, 14) & (483, 295, 203)  & 71.613 & 0.011 & $11\,869\,930$ \\
  {\sf T5} & (10, 7) & (715, 694, 290)   & 2112.891 & 0.092 & $123\,580\,220$ \\
  {\sf T6} & (8, 4) & (1917, 1532, 805)   & --- & 1.227 & $1\,982\,541\,397$ \\
  {\sf T7} & (8, 4) & (2409, 3833, 1261)  & --- & 7.847 & $10\,745\,238\,510$ \\
  {\sf T8} & (3, 2) & (7840, 6133, 3510)  & --- & 40.983 & $45\,784\,567\,320$ \\
  \hline
\end{tabular}
\caption{
 Comparative timings (in seconds) for the computation of the polynomial subresultants $\Sres_d((x-\alpha)^m,(x-\beta)^n)$, on several instances of $(\alpha,\beta)\in\mathbb{Q}^2$ and $(m,n,d)\in\mathbb{N}^3$, using a generic subresultant algorithm implemented in the {\sf RegularChains} package
(column~{\sf Generic 1}), versus the specialized algorithm described in Section~\ref{sec:thm1} (column~{\sf New 1}).
    All examples were run on the same machine, with the latest version of {\sf Maple}. For entries marked with a ---, the computations were aborted after more than 17 hours, with  all available memory (150 Gb of RAM) consumed.
The last column displays the bit size of the output.
}
  \label{tab:timings}
\end{table}

\begin{table}[t]
  \centering
\begin{tabular}{|c|r|r|r|r|r|}
  \hline
  $\#$ & {\sf Generic 2}  & {\sf Output size G2} & {\sf New 2} & {\sf Output size N2}\\
  \hline
  {\sf T1} & 0.011 & $3\,297$ & 0.001 & $201\,764$ \\
  {\sf T2}   & 0.071 & $28\,739$ & 0.005 & $5\,113\,012$ \\
  {\sf T3}   & 0.281 & $79\,253$ & 0.030 & $14\,744\,328$ \\
  {\sf T4}  & 0.306 & $57\,633$ & 0.034 & $24\,875\,833$ \\
  {\sf T5}   & 8.921 & $423\,993$ & 0.905 & $249\,854\,978$ \\
  {\sf T6}   & 211.895 & $2\,458\,114$ & 12.578 & $4\,187\,207\,983$ \\
  {\sf T7}  & 1992.231 & $8\,511\,770$ & 83.145 & $21\,885\,019\,390$ \\
  {\sf T8}  & 15627.306 & $13\,035\,552$ &  237.423 & $57\,964\,587\,220$ \\
  \hline
\end{tabular}
\caption{
Comparative timings (in seconds) for the computation of the principal subresultants $\PSres_d((x-\alpha)^m,(x-\beta)^n)$, on the instances {\sf T1}--{\sf T8} from Table~\ref{tab:timings}, using a generic subresultant algorithm implemented in {\sf C}
(column~{\sf Generic 2}), versus the specialized algorithm described in Section~\ref{sec:thm5} implemented in {\sf Maple} (column~{\sf New~2}).
Column {\sf Output size {\sf G2}} displays the bit size of the integer $\PSres_d((x-\alpha)^m,(x-\beta)^n)$ computed by~{\sf Generic 2}.
Timings displayed in column~{\sf New~2} correspond to the computation of
all $\PSres_k((x-\alpha)^m,(x-\beta)^n)$ for $0\leq k < \min \{ m,n \}-1$.
Column {\sf Output size {\sf N2}} displays the bit size of the $\min \{ m,n \}$ integers computed by~{\sf New 2}.
}
  \label{tab:timings-bis}
\end{table}

\subsection{Practical issues} \label{ssec:practical} The algorithms described
in this article have not only a good theoretical complexity, but also a good
practical efficiency. We performed some experimental comparisons in {\sf
Maple}, between an implementation of our specialized algorithm in
Section~\ref{sec:thm1} and a generic subresultant algorithm available in the
package {\sf
RegularChains}\footnote{\url{http://www.regularchains.org/index.html}}. As
expected, our algorithm is much faster, since it exploits the special
structure of the input polynomials.

Table~\ref{tab:timings} displays some timings for computing
$\Sres_d((x-\alpha)^m,(x-\beta)^n)$, for various random choices of
$\alpha,\beta,m,n$ and~$d$. Even for moderate degrees $m, n$, the specialized
algorithm is about thousands of times faster. For higher degrees, the generic
algorithm becomes quite slow, while the specialized algorithm has a very
satisfactory speed.

We also implemented in {\sf Maple} the algorithm in Section~\ref{sec:thm5},
and this time we compared it, on the same examples as in
Table~\ref{tab:timings},  with an algorithm written in~{\sf C} by Mohab
Safey El Din. The experimental results are displayed in
Table~\ref{tab:timings-bis}. Once again, the specialized algorithm is faster
than the generic algorithm.

\subsection{Subresultants for other structured polynomials} The question
addressed in this article is a particular case of a much broader topic, the
design of efficient algorithms for \emph{structured polynomials}.

Preliminary results indicate that, for many polynomials whose coefficients
satisfy linear recurrences, their subresultants have coefficients that also
obey such recurrences; this leaves hope that their computation can be
performed in linear time. We plan to study such generalizations in a future
work.

For the time being, we performed promising experiments for subresultants of generalized Laguerre polynomials~\cite[\S5.1]{Szeg75}, defined by
\[ L_n^{(\alpha)} (x) = \sum_{i=0}^n
{n+\alpha \choose n-i} \frac{(-x)^i}{i!}, \]
and on classical Hermite polynomials~\cite[\S5.5]{Szeg75}, defined by
\[H_{2n} (x) = (2n)! \sum_{m=0}^{n} \frac{(-1)^m}{m!(2n - 2m)!} (2x)^{2n - 2m}.\]

\bigskip
\bibliographystyle{alpha}

\begin{thebibliography}{DHKS2007}

\bibitem[AJ2006]{AJ2006}
F. Ap\'ery,  J.-P.  Jouanolou.
\newblock{\em R\'esultant et sous-r\'esultant: le cas d'une variable avec exercices corrig\'es.\/}
\newblock Hermann, Paris (2006).

\bibitem[Bak1975]{Baker75}
{G. A. Baker Jr.}
\newblock{\em {The Essentials of Pad\'e Approximants.}\/}
\newblock {Academic Press, New York, 1975. xi+306~pp.}
\newline\begin{small}\url{http://doi.org/10.1017/CBO9780511530074}\end{small}

\bibitem[Boc1907]{Bocher1907}
{M. B\^ocher.}
\newblock{\em {Introduction to Higher Algebra.}\/}
\newblock {The MacMillan Company, 1907. xi+321~pp.}
\newline\begin{small}\url{http://archive.org/details/cu31924002936536}\end{small}

\bibitem[BDKSV2017]{BDKSV17}
A. Bostan, C. D'Andrea, T. Krick, A. Szanto, M. Valdettaro.
\newblock{\em Subresultants in multiple roots: an extremal case.\/}
\newblock Linear Algebra Appl. 529 (2017), no. 3, 185--198.
\newline\begin{small}\url{http://doi.org/10.1016/j.laa.2017.04.019}\end{small}

\bibitem[Col1967]{Collins67}
{G. E. Collins.}
\newblock{\em {Subresultants and reduced polynomial remainder sequences.}\/}
\newblock {J. ACM 14 (1967), no. 1, 128--142.}
\newline\begin{small}\url{http://doi.org/10.1145/321371.321381}\end{small}


\bibitem[Col1973]{Collins73}
{G. E. Collins.}
\newblock{\em {Computer algebra of polynomials and rational functions.}\/}
\newblock {Amer. Math. Monthly 80 (1973), 725--755.}
\newline\begin{small}\url{http://doi.org/10.2307/2318161}\end{small}


\bibitem[Col1974]{Collins74}
{G. E. Collins.}
\newblock{\em {Quantifier elimination for real closed fields by cylindrical algebraic decomposition--preliminary report.}\/}
\newblock {SIGSAM Bull. 8, no. 3 (1974), 80--90.}
\newline\begin{small}\url{http://doi.org/10.1145/1086837.1086852}\end{small}

\bibitem[EMOT1953]{EMOT1953}
{A. Erd\'{e}lyi, W. Magnus, F. Oberhettinger and F. G. Tricomi}.
\newblock {\em Higher transcendental functions, {V}ol. {II}}.
\newblock {McGraw-Hill, 1953. xviii+396~pp.}
\newblock {Based, in part, on notes left by Harry Bateman, and compiled by the Staff of the Bateman Manuscript Project.}
\newline\begin{small}\url{http://authors.library.caltech.edu/43491/}\end{small}

\bibitem[Eul1778]{Euler78}
{L. Euler}.
\newblock{\em {Specimen transformationis singularis serierum.}\/}
\newblock {Nova Acta Academiae Scientarum Imperialis Petropolitinae 12, 1794, pp. 58--70.}
\newblock {Reprinted in Opera Omnia Series 1, Volume 16, 2, pp. 41--55, Eneström-Number E710.}
\newline\begin{small}\url{http://eulerarchive.maa.org}\end{small}

\bibitem[GL2003]{GaLu03}
J. von zur Gathen, T. L\"ucking.
\newblock{\em Subresultants revisited.\/}
\newblock Theoret. Comput. Sci. 297 (2003), no. 1--3, 199--239.
\newline\begin{small}\url{http://doi.org/10.1016/S0304-3975(02)00639-4}\end{small}

\bibitem[GG2013]{GaGe13}
J. von zur Gathen, J. Gerhard.
\newblock{\em Modern Computer Algebra, 3rd Edition.\/}
\newblock Cambridge University Press, 2013.
\newline\begin{small}\url{http://doi.org/10.1017/CBO9781139856065}\end{small}

\bibitem[GGZ2012]{GGZ12}
O. Gomilko, F. Greco,  K. Zi\c{e}tak.
\newblock{\em A Pad\'e family of iterations for the matrix sign function and related problems.\/}
\newblock Numer. Linear Algebra Appl. 19 (2012), no. 3, 585--605.
\newline\begin{small}\url{http://doi.org/10.1002/nla.786}\end{small}


\bibitem[Hab1948]{Habicht48}
{W. Habicht.}
\newblock{\em {Eine Verallgemeinerung des Sturmschen Wurzelz\"{a}hlverfahrens.}\/}
\newblock {Comment. Math. Helv. 21 (1948), 99--116.}
\newline\begin{small}\url{http://doi.org/10.1007/BF02568028}\end{small}


\bibitem[HS1967]{HS67}
{A. S. Householder, G. W. Stewart.}
\newblock{\em {Bigradients, Hankel determinants, and the Pad\'e table.}\/}
\newblock In {\em Constructive aspects of the fundamental theorem of algebra}, 131--150
(Proc. Sympos., Z\"urich-R\"uschlikon, 1967, edited by B. Dejon and P. Henrici).
\newblock Wiley-Interscience, New York, 1969.

\bibitem[Hou1968]{Householder68}
{A. S. Householder.}
 \newblock{\em {Bigradients and the problem of Routh and Hurwitz.}\/}
 \newblock {SIAM Rev. 10 (1968), no. 1, 56--66.}
\newline\begin{small}\url{http://doi.org/10.1137/1010003}\end{small}

\bibitem[Ise1979]{Iserles79}
{A. Iserles.}
 \newblock{\em {A note on Pad\'e approximations and generalized hypergeometric functions.}\/}
 \newblock {BIT 19 (1979), no. 4, 543--545.}
\newline\begin{small}\url{http://doi.org/10.1007/BF01931272}\end{small}

\bibitem[Jac1836]{Jacobi1836}
 C. G. J. Jacobi.
 \newblock{\em De eliminatione variabilis e duabus aequationibus algebraicis.\/}
 \newblock J. Reine Angew. Math. 15 (1836), 101--124.
\newline\begin{small}\url{http://doi.org/10.1515/crll.1836.15.101}\end{small}

\bibitem[Jac1859]{Jacobi1859}
C. G. J. Jacobi.
\newblock Untersuchungen \"uber die {D}ifferentialgleichung der hypergeometrischen {R}eihe.
\newblock J. Reine Angew. Math. 56 (1859), 149--165.
\newline\begin{small}\url{http://eudml.org/doc/147752}\end{small}

\bibitem[Koo1984]{Koornwinder84}
T.H. Koornwinder
\newblock{\em Orthogonal polynomials with weight function {$(1-x)\sp{\alpha}(1+x)\sp{\beta }+M\delta (x+1)+N\delta (x-1)$}.}
\newblock{Canad. Math. Bull. 27 (1984), no.2, 205--214.}
\newline\begin{small}\url{http://doi.org/10.4153/CMB-1984-030-7}\end{small}

\bibitem[Lag1885]{Lag85}
{E. Laguerre.}
\newblock{\em Sur la r\'eduction en fractions continues d'une fraction qui satisfait \`a une \'equation diff\'erentielle lin\'eaire du premier ordre dont les coefficients sont rationnels.}
\newblock Journal de math\'ematiques pures et appliqu\'ees 4e s\'erie, tome 1, (1885), 135--166.
\newline\begin{small}\url{http://sites.mathdoc.fr/JMPA/PDF/JMPA_1885_4_1_A5_0.pdf}\end{small}

\bibitem[Lec2018]{Lecerf18}
 {G.  Lecerf.}
 \newblock{\em {On the complexity of the Lickteig-Roy subresultant algorithm.}\/}
 \newblock {J. Symbolic Comput. 92 (2019), 243--268.}
\newline\begin{small}\url{http://doi.org/10.1016/j.jsc.2018.04.017}\end{small}


\bibitem[Loo1983]{Loos83}
{R. Loos.}
\newblock{\em {Generalized Polynomial Remainder Sequences.}\/}
\newblock {Computer Algebra, Part of the Computing Supplementa book series, vol. 4 (1983), 115--137.}
\newline\begin{small}\url{http://doi.org/10.1007/978-3-7091-7551-4_9}\end{small}



\bibitem[LR2001]{LR2001}
 {T.  Lickteig, M.-F. Roy.}
 \newblock{\em {Sylvester-Habicht sequences and fast Cauchy index computation.}\/}
 \newblock {J. Symbolic Comput. 31 (2001), no. 3, 315--341.}
\newline\begin{small}\url{http://doi.org/10.1006/jsco.2000.0427}\end{small}

\bibitem[Mis1993]{Mishra}
{B. Mishra.}
\newblock{\em {Algorithmic algebra.}}
\newblock  {Texts and Monographs in Computer Science. Springer-Verlag, New York, 1993. xii+416~pp.}
\newline\begin{small}\url{http://doi.org/10.1007/978-1-4612-4344-1}\end{small}

\bibitem[Pad1901]{Pade01}
{H. Pad\'e.}
\newblock{\em Sur l'expression g\'en\'erale de la fraction rationnelle approch\'ee de $(1 + x)^m$.}
\newblock C.R. Acad. Sci. Paris, 132 (1901), 754--756.
\newline\begin{small}\url{http://gallica.bnf.fr/ark:/12148/bpt6k30888/f802.item}\end{small}


\bibitem[Per1913]{Perron13}
{O. Perron.}
\newblock{\em Die Lehre von den Kettenbr\"uchen.}
\newblock{Druck und Verlag von B.G. Teubner, Leipzig \& Berlin, 1913. viii+520~pp.}
\newline\begin{small}{\url{http://archive.org/details/dielehrevondenk00perrgoog/}}\end{small}

\bibitem[Rei1997]{Reischert97}
D. Reischert.
\newblock{\em Asymptotically fast computation of subresultants.\/}
\newblock Proceedings ISSAC'97, 233--240, ACM, New York, 1997.
\newline\begin{small}\url{http://doi.org/10.1145/258726.258792}\end{small}

\bibitem[SZ94]{gfun}
B. Salvy, P. Zimmermann.
\newblock{\em  GFUN: a Maple package for the manipulation of generating and holonomic functions in one variable.\/}
\newblock ACM Transactions on Mathematical Software 20 (1994), no. 2, 163--177.
\newline\begin{small}\url{http://dl.acm.org/citation.cfm?id=178368}\end{small}

\bibitem[Str1976]{Strassen76}
{V. Strassen.}
\newblock {\em Einige Resultate \"uber Berechnungskomplexit\"at.}
\newblock {Jber. Deutsch. Math.-Verein 78 (1976), no. 1, 1--8.}
\newline\begin{small}\url{http://eudml.org/doc/146659}\end{small}

\bibitem[Syl1839]{Sylvester1839}
J. J. Sylvester.
\newblock {\em  Memoir on rational derivation from equations of coexistence, that is to say, a new and extended theory of elimination.\/}
\newblock Philos. Mag. 15 (1839), 428--435.
\newblock Also appears in the Collected Mathematical Papers
          of James Joseph Sylvester, Vol.~{\bf 1},
Chelsea Publishing Co. (1973), 40–-46.
\newline\begin{small}\url{http://doi.org/10.1080/14786443908649916}\end{small}

\bibitem[Syl1840]{Sylvester1840}
J. J. Sylvester.
\newblock {\em A method of determining by mere inspection the derivatives from two equations of any degree.\/}
\newblock Philos. Mag. 16 (1840), 132--135.
\newblock Also appears in the Collected Mathematical Papers
          of James Joseph Sylvester, Vol.~{\bf 1},
 Chelsea Publishing Co. (1973), 54–-57.
\newline\begin{small}\url{http://doi.org/10.1080/14786444008649995}\end{small}

\bibitem[Sze1975]{Szeg75}
G. Szeg\"o.
\newblock {\em Orthogonal Polynomials},
\newblock Providence, RI: Amer. Math. Soc., originally published 1939, 4th ed. 1975.
\newline\begin{small}\url{http://people.math.osu.edu/nevai.1/SZEGO/szego=szego1975=ops=OCR.pdf}\end{small}

\bibitem[Val2017]{MVthesis}
Marcelo A. Valdettaro,
\newblock {\em F\'ormulas en ra\'{\i}ces para las subresultantes},
\newblock Tesis Doctoral, Universidad de Buenos Aires. Facultad de Ciencias Exactas y Naturales.
 2017.
\newline\begin{small}\url{http://cms.dm.uba.ar/academico/carreras/doctorado/tesis-Valdettaro.pdf}\end{small}


\bibitem[WZ1992]{WiZe}
H.S. Wilf, D. Zeilberger.
\newblock {\em An algorithmic proof theory for hypergeometric (ordinary and ``q'') multisum/integral identities.}
\newblock Invent. Math. 108 (1992), no. 3, 575--633.
\newline\begin{small}\url{http://doi.org/10.1007/BF02100618}\end{small}

\bibitem[Zei90]{Zeilberger90}
D. Zeilberger.
\newblock {\em The method of creative telescoping.}
\newblock J. Symbolic Comput. 11 (1991), no. 3, 195--204.
\newline\begin{small}\url{http://doi.org/10.1016/S0747-7171(08)80044-2}\end{small}

\end{thebibliography}
\def\cprime{$'$} \def\cprime{$'$} \def\cprime{$'$}

\end{document}